\newenvironment{reminder}[1]{\medskip
	\noindent {\bf Reminder of #1.  }\em}{\smallskip}
\newtheorem{theorem}{Theorem}[section]
\newtheorem{lemma}[theorem]{Lemma}
\newtheorem{hypothesis}[theorem]{Hypothesis}
\newtheorem{remark}[theorem]{Remark}
\newtheorem{claim}[theorem]{Claim}
\newtheorem{corollary}[theorem]{Corollary}
\def \poly {{\text{poly}}}
\def \P {{\sf P}}
\def \RP {{\sf RP}}
\def \PSPACE {{\sf PSPACE}}
\def \SPACE {{\sf SPACE}}
\def \N {{\mathbb N}}
\def \F {{\mathbb F}}
\def \eps {{\varepsilon}}
\def \T {{\sf TIME}}
\def \NT {{\sf NTIME}}
\def \NTIME {\NT}
\def \TIME {\T}
\def \NL {{\sf NL}}
\def \BPP {{\sf BPP}}
\def \eps {\varepsilon}
\def \poly {\text{poly}}
\patchcmd{\@bibitem}{\doi}{\small\doi}{}{}
\title{Simulating Time With Square-Root Space\footnote{To appear in STOC 2025.}}
\author{Ryan Williams\thanks{Work supported in part by NSF CCF-2127597 and NSF CCF-2420092.}
\\ MIT}
\date{}
\begin{document}

\maketitle

\begin{abstract} We show that for all functions $t(n) \geq n$, every multitape Turing machine running in time $t$ can be simulated in space only $O(\sqrt{t \log t})$. This is a substantial improvement over Hopcroft, Paul, and Valiant's simulation of time $t$ in $O(t/\log t)$ space from 50 years ago [FOCS 1975, JACM 1977]. Among other results, our simulation implies that bounded fan-in circuits of size $s$ can be evaluated on any input in only $\sqrt{s} \cdot \poly(\log s)$ space, and that there are explicit problems solvable in $O(n)$ space which require $n^{2-\eps}$ time on a multitape Turing machine for all $\eps > 0$, thereby making a little progress on the $\P$ versus $\PSPACE$ problem.

Our simulation reduces the problem of simulating time-bounded multitape Turing machines to a series of implicitly-defined Tree Evaluation instances with nice parameters, leveraging the remarkable space-efficient algorithm for Tree Evaluation recently found by Cook and Mertz [STOC 2024].
\end{abstract}

\thispagestyle{empty}
\setcounter{page}{0}

\newpage

\section{Introduction}

One of the most fundamental questions in theoretical computer science is: \emph{how does time relate to space, in computation?} For instance, can every problem solvable in polynomial space ($\PSPACE$) also be solved in polynomial time ($\P$)? Although it is widely believed that $\P \neq \PSPACE$, there has been scant progress on separating time and space over the decades~\cite{DBLP:conf/focs/StearnsHL65,DBLP:journals/jacm/HopcroftU68a,DBLP:conf/stoc/StockmeyerM73,DBLP:journals/jacm/HopcroftPV77,DBLP:journals/jcss/PaulR81,DBLP:journals/mst/HalpernLMW86}. In particular, for general models of computation (beyond one-tape models), only time lower bounds of the form $\Omega(n \log n)$ have been reported for linear-space problems, whereas $\P \neq \PSPACE$ requires showing that there is a linear-space problem that cannot be solved in $O(n^k)$ time for every constant $k \geq 1$. 

In this paper, we make another step towards separating $\P$ from $\PSPACE$, by showing that there are problems solvable in $O(n)$ space that cannot be solved in $n^2/\log^c n$ time for some constant $c > 0$. This is the first generic polynomial separation of time and space in a robust computational model (namely, the multitape Turing machine). The separation is accomplished by exhibiting a surprisingly space-efficient simulation of generic time-bounded algorithms. More formally, let $\TIME[t(n)]$ be the class of decision problems decided by $O(t(n))$-time multitape Turing machines, and let $\SPACE[s(n)]$ the class  decided by $O(s(n))$-space multitape Turing machines.

\begin{theorem}\label{thm:main} For every function $t(n) \geq n$, $\TIME[t(n)] \subseteq \SPACE[\sqrt{t(n) \log t(n)}]$.
\end{theorem}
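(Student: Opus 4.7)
The plan is to simulate any time-$t(n)$ multitape Turing machine $M$ on input $x$ by reducing to a series of implicitly-defined Tree Evaluation instances and applying the Cook--Mertz algorithm. I would start from a block decomposition à la Hopcroft--Paul--Valiant: fix a block length $b$ (to be chosen), convert $M$ to block-respecting form so that time is partitioned into $T = t/b$ length-$b$ blocks and space into cell-blocks of $b$ cells, with each tape head remaining inside a single cell-block during each time block and moving only across block boundaries. A ``snapshot'' at a block boundary --- control state, head positions, and contents of the $O(1)$ cell-blocks at the heads --- then fits in $\tilde O(b)$ bits, and each block's transition depends on only $O(1)$ prior snapshots.

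Next, I would express the simulation as Tree Evaluation on an implicit height-$h$ binary tree, $h = \lceil \log T \rceil$, over time intervals. Each internal node stores a ``summary'' describing how its interval transforms incoming cell-block contents into outgoing ones, designed so that (i) the summary has $\tilde O(b)$ bits and (ii) it is computable from the two child summaries in $\tilde O(b)$ work space using only the implicit description of $M$ and $x$. Then the accept/reject bit of $M(x)$ can be extracted from the root value --- possibly after chaining several such evaluations corresponding to different phases of the computation.

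Finally, I would apply the Cook--Mertz Tree Evaluation algorithm to each instance. Its bound is roughly $\tilde O(b + h)$ up to polylogarithmic factors, well below the trivial $O(hb)$. Substituting $h = O(\log(t/b))$ and choosing $b \approx \sqrt{t/\log t}$ (with polylogarithmic tuning) yields total space $O(\sqrt{t(n) \log t(n)})$, matching Theorem~\ref{thm:main}.

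The main obstacle is the summary design. Even in block-respecting form, the cells read inside one time interval were typically written in some unrelated earlier interval, so the natural dependency is a DAG rather than a tree. Forcing a tree-shaped Tree Evaluation instance requires threading every needed tape value through the two-child summaries: each interval must carry explicit contents of the $O(1)$ cell-blocks it ``imports'' from outside and ``exports'' to the future, keeping the summary within $\tilde O(b)$ bits while letting the combining function re-simulate $M$ locally on the merged data. Getting this bookkeeping right, and matching it to the precise Cook--Mertz bound so the final exponent lands on $\sqrt{t \log t}$ rather than a weaker power, is where the real technical work lies.
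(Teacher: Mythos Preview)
Your proposal has the right ingredients (block-respecting decomposition, Tree Evaluation, Cook--Mertz) but the tree you build is the wrong shape, and this is a genuine gap rather than a bookkeeping detail.

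You set up a \emph{balanced} binary tree of height $h=\lceil\log T\rceil$ over time intervals, with each node carrying a $\tilde O(b)$-bit ``summary'' of how its interval transforms tape contents. This cannot work: a node near the root covers $\Theta(T)$ time blocks, during which the heads may visit $\Theta(T)$ distinct cell-blocks per tape. The transformation such an interval induces depends on, and writes to, $\Theta(T\cdot b)=\Theta(t)$ bits of tape; there is no $\tilde O(b)$-bit object that captures it, and no way to combine two such summaries in $\tilde O(b)$ workspace. Your own arithmetic flags the problem: with $h=O(\log(t/b))$ and $\tilde O(b)$-bit nodes, Cook--Mertz would give $\tilde O(b+h)$ space, which is minimized at $b=\Theta(\log t)$ and would place $\TIME[t]$ in polylogarithmic space --- far stronger than the theorem and clearly false. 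The fact that your chosen $b\approx\sqrt{t/\log t}$ does not actually balance anything in your stated bound is a symptom of this.

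The paper takes a very different route. It does \emph{not} use a balanced interval tree. Instead it builds the natural computation DAG on the $\Theta(t/b)$ time blocks (each node has in-degree $O(1)$ because a single time block touches $O(1)$ cell-blocks), and then \emph{unfolds} this DAG into a tree of height $h=\Theta(t/b)$ and fan-in $O(1)$, with $O(b)$-bit values at every node --- each node corresponds to a \emph{single} time block, so the value really is $O(b)$ bits. Cook--Mertz then gives space $O\!\left(b + \tfrac{t}{b}\log b\right)$, balanced at $b=\Theta(\sqrt{t\log t})$. A second idea you are missing entirely: without obliviousness, the edges of the computation DAG are unknown in advance, so the paper enumerates all candidate DAGs via a succinct $O(t/b)$-bit encoding of head movements, and bakes a consistency check (with a propagating \texttt{FAIL} value) into the node functions so that only the correct graph yields a non-\texttt{FAIL} root. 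This enumeration, not any interval-summary trick, is how the DAG-versus-tree obstacle is actually handled.
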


We find \Cref{thm:main} to be very surprising. It appears to have been a common belief for decades that $t$ time cannot be simulated in $t^{1-\eps}$ space, for any constant $\eps > 0$. For example, assuming that $t$ time cannot be simulated in $t^{1-\eps}$ space, Sipser gave a poly-time derandomization of $\RP$~\cite{DBLP:journals/jcss/Sipser88}\footnote{Sipser also assumed explicit constructions of certain bipartite expander graphs, which were later constructed~\cite{DBLP:journals/jacm/SaksSZ98}.}  and Nisan-Wigderson gave a subexponential-time derandomization of $\BPP$~(\cite[Section 3.6]{DBLP:journals/jcss/NisanW94}).\footnote{Note that the derandomization of Nisan and Wigderson can also be based on assuming the weaker hypothesis $\TIME[t] \not\subset \Sigma_2 \TIME[t^{1-\eps}]$, which remains open and almost certainly cannot be refuted using the ideas of this paper.}

Given the power of multitape Turing machines, \Cref{thm:main} has many interesting consequences. Along with the ability to evaluate generic straight-line programs of length $n$ (over a bounded domain) in only $\tilde{O}(\sqrt{n})$ space (see \Cref{sec:bp}), \Cref{thm:main} immediately implies by diagonalization that there are $s(n)$-space problems that cannot be solved in $s(n)^{2-\eps}$ time for all $\eps > 0$: a ``truly-polynomial'' time lower bound. 

\begin{corollary}\label{cor:space-lb} For space constructible $s(n) \geq n$, and all $\eps > 0$,
$\SPACE[s(n)] \not\subset \TIME[s(n)^{2-\eps}]$.
\end{corollary}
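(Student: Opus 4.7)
The plan is to derive the corollary as an immediate consequence of \Cref{thm:main} together with the standard space hierarchy theorem, via a contradiction argument. Suppose for some $\eps > 0$ we had $\SPACE[s(n)] \subseteq \TIME[s(n)^{2-\eps}]$. I would then apply \Cref{thm:main} with $t(n) := s(n)^{2-\eps}$ to get
\[
\TIME[s(n)^{2-\eps}] \subseteq \SPACE\bigl[\sqrt{s(n)^{2-\eps} \log(s(n)^{2-\eps})}\bigr].
\]

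Next I would simplify the right-hand side: since $\log(s(n)^{2-\eps}) = (2-\eps)\log s(n) = O(\log s(n))$, the square root is bounded by $O\bigl(s(n)^{1-\eps/2} \sqrt{\log s(n)}\bigr)$. Because $s(n) \geq n$ and $\eps > 0$ is fixed, this quantity is $o(s(n))$. Composing the two inclusions yields $\SPACE[s(n)] \subseteq \SPACE[o(s(n))]$, and since $s$ is space constructible, the space hierarchy theorem forbids this, giving the desired contradiction.

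There is no real obstacle; the only subtlety is making sure the hypotheses of the space hierarchy theorem are met, which is exactly why the statement requires $s$ to be space constructible (so that $o(s)$ space cannot simulate all of $s$ space). One could optionally state the conclusion in a slightly sharper quantitative form --- e.g.\ exhibiting an explicit $c$ such that $\SPACE[s(n)] \not\subset \TIME[s(n)^2/\log^c s(n)]$ --- by carrying the $\sqrt{\log t}$ factor through the calculation instead of absorbing it into $s^{1-\eps/2}$, which matches the informal claim made earlier in the introduction.
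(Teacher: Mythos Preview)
Your proposal is correct and matches the paper's proof essentially line for line: assume the containment, apply \Cref{thm:main} with $t(n)=s(n)^{2-\eps}$ to land in $\SPACE[o(s(n))]$, and contradict the space hierarchy theorem. The paper does not spell out the $s(n)^{1-\eps/2}\sqrt{\log s(n)}$ bound explicitly but simply writes the composed inclusion and notes the resulting space bound is $o(s(n))$.
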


Similarly, if \Cref{thm:main} could be extended to show for \emph{all} $\eps > 0$ that every time-$t(n)$ Turing machine can be simulated in $O(t(n)^{\eps})$ space, then $\P \neq \PSPACE$ would follow (see the arguments in \Cref{sec:consequences}). We discuss this possibility at the end of the paper, in Section~\ref{sec:discussion}. It follows from \Cref{cor:space-lb} that any complete problem for linear space requires quadratic time to be solved. For example:

\begin{corollary}\label{cor:quad-lb} The language $L = \{\langle M,x,1^k\rangle \mid |M| \leq k\text{ and }M(x)\text{ halts in space } k\}$ requires $n^{2-\eps}$ time to solve on a multitape Turing machine, for every $\eps > 0$.    
\end{corollary}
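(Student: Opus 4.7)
The plan is to observe that $L$ is complete for linear space under very efficient reductions, and then invoke \Cref{cor:space-lb} with $s(n) = n$. Specifically, I would show (i) $L \in \SPACE[O(n)]$, and (ii) every $L' \in \SPACE[O(n)]$ reduces to $L$ in linear time with only constant-factor blowup in instance size; together these force any $n^{2-\eps}$-time algorithm for $L$ to contradict $\SPACE[n] \not\subset \TIME[n^{2-\eps}]$.

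For membership, on input $\langle M, x, 1^k \rangle$ of length $n$, a universal multitape Turing machine simulates $M$ on $x$ step by step while maintaining an $O(k)$-bit step counter. It accepts as soon as $M$ halts, rejects if $M$ ever attempts to move a head past cell $k$ on any work tape, and rejects once the step counter reaches $2^{ck}$ for a suitable constant $c$ (at which point $M$ must be cycling through a bounded-space configuration). Using the promise $|M| \le k$ and a standard encoding where $M$ has a constant number of work tapes, the simulator needs $O(|M| + k) = O(n)$ space.

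For hardness, fix any $L' \in \SPACE[O(n)]$ decided by some fixed machine $M'$ in space at most $c_0 n$. The reduction $x \mapsto \langle M', x, 1^{c_0 |x|} \rangle$ is computable in $O(|x|)$ time, produces an instance of $L$ of length at most $(c_0 + O(1))|x|$, and satisfies $x \in L'$ iff its image lies in $L$. If $L \in \TIME[n^{2-\eps}]$ for some $\eps > 0$, composing this reduction with the assumed algorithm would decide every $L' \in \SPACE[O(n)]$ in time $O(|x|^{2-\eps})$, placing $\SPACE[n] \subseteq \TIME[n^{2-\eps}]$ and contradicting \Cref{cor:space-lb}.

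The only subtle point is the universal simulation: one must be careful that the cost of representing $M$'s transition function does not multiply into the space bound. This is handled by the standard device of storing $M$'s description on a dedicated read-only work tape and restricting to encodings where $M$ has a constant number of tapes (any multitape linear-space language is captured this way, up to constant factors). No part of the argument is genuinely hard; the main content of \Cref{cor:quad-lb} is really carried by \Cref{cor:space-lb}, with $L$ serving only as a concrete universal target.
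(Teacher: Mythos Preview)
Your proposal is correct and follows essentially the same approach as the paper: assume $L \in \TIME[n^{2-\eps}]$, reduce any $L' \in \SPACE[n]$ to $L$ via $x \mapsto \langle M', x, 1^{c|x|} \rangle$, and conclude $\SPACE[n] \subseteq \TIME[n^{2-\eps}]$, contradicting \Cref{cor:space-lb}. The paper's proof omits your membership argument (i), since only the hardness direction is needed for the lower bound; otherwise the arguments are identical.
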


Another consequence is that there are \emph{context-sensitive languages} which need essentially $n^2$ time to be recognized by multitape Turing machines. This follows from the fact that $\mathsf{NSPACE}[n]$  (nondeterministic linear space) is equivalent to the class of context-sensitive languages~\cite{Kuroda64}.

Since multitape Turing machines can evaluate any circuit of size $s$ (and fan-in two) on any given input of length $n \leq s$ in only $s \cdot \poly(\log s)$ time~\cite{Pippenger77}, it follows that arbitrary subquadratic-size circuits can be simulated by subexponential-size branching programs.\footnote{Unfortunately, the reference \cite{Pippenger77} does not seem to be available online. See \Cref{sec:bp} for an alternative sketch of the argument.}

\begin{corollary}\label{cor:bp} There is a universal $k \geq 1$ such that for all $s \geq n$, every bounded fan-in circuit of size $s$ and $n$ inputs has a branching program of size at most $2^{k \sqrt{s} \log^k s}$.
\end{corollary}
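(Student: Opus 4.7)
The plan is to compose three ingredients: Pippenger's efficient circuit evaluation, the space-efficient simulation of \Cref{thm:main}, and the standard conversion from a space-bounded multitape Turing machine into a branching program.

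First, invoke Pippenger's result~\cite{Pippenger77} (or the alternative sketch of \Cref{sec:bp}): given a size-$s$, fan-in-two circuit $C$ with $n$ inputs, there is a universal multitape Turing machine $U$ that, on input $\langle C,x\rangle$ of total length $N = O(s \log s)$, outputs $C(x)$ in time $t = s \cdot \poly(\log s)$. Next, apply \Cref{thm:main} to $U$: there is an equivalent multitape machine $U'$ deciding the same language in space
\[
S = O\bigl(\sqrt{t \log t}\bigr) = \sqrt{s}\cdot \poly(\log s).
\]

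Now I would hard-wire $C$ into a branching program. Fix the description of $C$ and view $U'$ as a machine with $C$ written on (say) a read-only input tape and the $n$-bit string $x$ on a second read-only input tape. The configuration space of $U'$ on inputs of length $N$ has size at most $2^{O(S + \log N)} = 2^{O(\sqrt{s}\,\poly(\log s))}$, since $S = \Omega(\log N)$. Build the branching program $B_C$ whose nodes are these configurations: from a configuration whose next move reads a bit of $C$, follow the unique transition determined by the known bit; from a configuration whose next move reads a bit $x_i$, branch on $x_i$. The sink nodes are accepting/rejecting halting configurations. Then $B_C$ computes $C(x)$, depends only on $x$ (as a branching program on $n$ variables should), and has size $2^{k \sqrt{s} \log^k s}$ for some universal $k \geq 1$.

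The only mildly delicate step is the second one: one must ensure that the constants and polylogarithmic factors in $S$ depend only on $U$ (not on $C$), so that a single universal constant $k$ suffices for all circuits $C$. This follows because $U$ is fixed independently of $C$, and \Cref{thm:main} gives a simulation whose space bound depends only on $t$; the $\poly(\log s)$ slack from Pippenger's simulation and from \Cref{thm:main} combine into a $\poly(\log s)$ factor, which is absorbed into the exponent $\log^k s$. I do not anticipate serious obstacles beyond bookkeeping, since both ingredients are essentially plug-and-play once $C$ is treated as part of the input to the universal machine.
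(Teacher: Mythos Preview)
Your proposal is correct and follows essentially the same approach as the paper: invoke Pippenger's quasilinear-time circuit evaluation, apply \Cref{thm:main} to get $\sqrt{s}\cdot\poly(\log s)$ space, hardcode $C$, and use the standard configuration-graph translation from space-bounded machines to branching programs. Your write-up is somewhat more explicit than the paper's about how the hardcoding works at the level of branching-program nodes, but the argument is the same.
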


It is an interesting question whether the simulation of \Cref{thm:main} can be extended beyond multitape Turing machines, to more powerful computational models. We can generalize \Cref{thm:main} to $d$-dimensional multitape Turing machines, as follows:

\begin{theorem}\label{thm:d-dimensional}
    Every decision problem solvable by a $t(n)$-time $d$-dimensional multitape Turing machine can be decided in $O((t(n) \log t(n))^{1-1/(d+1)})$ space (on a typical one-dimensional multitape Turing machine).
\end{theorem}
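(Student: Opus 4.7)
The plan is to reproduce the Tree Evaluation-based simulation that establishes Theorem~\ref{thm:main}, changing only the geometric accounting to handle $d$-dimensional tapes. The crucial observation is that a head on a $d$-dimensional tape running for $b$ consecutive steps stays inside a box of side $O(b)$ and therefore touches at most $O(b^d)$ cells. This replaces the bound ``$O(b)$ cells per block'' that drove the analysis in the one-dimensional case.

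Concretely, I would partition the $t(n)$-step computation into $m = t(n)/b$ time-blocks of length $b$, for a parameter $b$ to be optimized at the end. Following the block-respecting philosophy, I would associate to each time-block its local window of at most $O(b^d)$ tape cells per tape, and to every contiguous union of blocks (an internal node of a balanced binary tree of depth $\log m$) the configuration at the end of its time interval. Each node's value then has bit-size $O(b^d \cdot \poly(\log t))$, encoding the relevant tape contents, head positions, and control state at the relevant interface.

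With this setup, I would invoke the Cook--Mertz Tree Evaluation algorithm exactly as in the proof of Theorem~\ref{thm:main}, but with the per-node value size raised from $O(b \cdot \poly(\log t))$ to $O(b^d \cdot \poly(\log t))$. This yields an overall space bound of $O(b^d \cdot \poly(\log t))$ plus low-order overhead depending on $\log(t/b)$. Choosing $b = \Theta((t(n) \log t(n))^{1/(d+1)})$ balances the cost of the per-node windows against the depth of the tree, and gives $b^d = \Theta((t(n)\log t(n))^{d/(d+1)}) = \Theta((t(n)\log t(n))^{1 - 1/(d+1)})$, which matches the claimed bound.

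The main obstacle is the bookkeeping at block interfaces. In one dimension the boundary between two adjacent time intervals is determined by a single head position per tape, but in $d$ dimensions one must describe where each head crosses the relevant $(d-1)$-dimensional face of its spatial block and how much of that face has been written to. Verifying that this summary still fits in $O(b^d \cdot \poly(\log t))$ bits per tree node, and that the composition of two children's summaries (a single step of the tree evaluator) can be carried out within the Cook--Mertz space budget, is the core technical step. Once this bookkeeping goes through, the optimization above gives Theorem~\ref{thm:d-dimensional}.
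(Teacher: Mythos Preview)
Your geometric accounting (a head running for $b$ steps on a $d$-dimensional tape stays inside an $O(b)^d$-cell cube) is correct, and your final parameter choice $b=(t\log t)^{1/(d+1)}$ is the right one. The gap is the tree structure.

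You propose a balanced binary tree of depth $\log m$, whose internal nodes represent contiguous unions of time blocks, with each node's value being ``the configuration at the end of its time interval.'' But that value is \emph{not} $O(b^d)$ bits. An internal node at level $\ell$ spans $2^{\ell}$ time blocks, i.e.\ $2^{\ell}b$ steps, and the configuration at the end of such an interval can have $\Theta(2^{\ell}b)$ modified cells scattered over a region of diameter $\Theta(2^{\ell}b)$ in each coordinate. At the root this is a full $\Theta(t)$-bit configuration. There is no composable $O(b^d)$-bit ``summary'' of a long time interval: knowing the local $b^d$-cell window at the end of the left half does not let you simulate the right half, because the right half may read cells the left half wrote far outside that window. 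Relatedly, your space analysis has nothing to balance: if the height really were $\log m$ and all values really were $O(b^d)$ bits, Cook--Mertz would give $O(b^d+\log m\cdot\log(b^d))$, and you would take $b=O(1)$.

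The paper's tree has height $m=\Theta(t/c)$, not $\log m$. Each tree node corresponds to a \emph{single} time block and a \emph{single} tape; its value is the content of one $c^d$-cell tape block (plus state and head position), so $O(c^d)$ bits. Its $O_{d,p}(1)$ children are the nodes whose tape blocks are read during that time block, determined by a computation graph on $O(t/c)$ vertices (the paper enumerates over all such graphs, using $O(t/c)$ bits, since there is no $d$-dimensional block-respecting/oblivious normalization). Cook--Mertz on this height-$(t/c)$, value-size-$O(c^d)$, constant-fan-in instance gives space $O\bigl(c^d+(t/c)\log c\bigr)$, and \emph{this} is the balance that forces $c=(t\log t)^{1/(d+1)}$.
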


The space bound of \Cref{thm:d-dimensional} matches the best known space bounds for simulating time-$t$ machines with \emph{one} $d$-dimensional tape~\cite{DBLP:journals/tcs/Loui81,DBLP:journals/siamcomp/LiskiewiczL90}. See \Cref{sec:related-work} for more references on prior simulations.

\begin{remark}[Extension to Oblivious Random-Access Models] \label{rem:RAM} At the present time, we do not know how to extend \Cref{thm:main} to arbitrary random-access models of computation. The main issue is that the indegree of the resulting computation graph (defined in \Cref{sec:main}) is so high that the computation graph cannot be stored in $O(t^{1-\eps})$ memory. However, we remark that if the pattern of reads and writes of a given random-access machine model is \emph{oblivious}, in the sense that given a step-count $i = 1,\ldots,t$ specified in $O(\log t)$ bits, the names of the registers being accessed in step $i$ can be computed in $O(\sqrt{t})$ space, then \Cref{thm:main} does apply, with only a $\poly(\log t)$ extra space factor. This is because such machines can be simulated by circuits of $t \cdot \poly(\log t)$ size, which can in turn be simulated efficiently by multitape Turing machines in $t \cdot \poly(\log t)$ time (see \Cref{sec:bp}).
\end{remark}

\subsection{Intuition} 

The key idea behind \Cref{thm:main} is to reduce the problem of simulating arbitrary time-bounded computations to particular instances of the {\sc Tree Evaluation} problem, defined by S.~Cook, McKenzie, Wehr, Braverman, and Santhanam~\cite{DBLP:journals/toct/CookMWBS12}. In this problem, one is given a complete $d$-ary tree of height $h$, where each leaf of the tree is labeled with a $b$-bit string, and each inner node of the tree is labeled with a function from $d\cdot b$ bits to $b$ bits.\footnote{Our notation is slightly different from the usual setup, where there is a parameter $k=2^b$. Here, our notation more closely follows Goldreich's exposition~\cite{Goldreich-CM24}.} (Each function is presented as a table of $2^{d \cdot b}$ values, each of which are $b$ bits.) Each inner node computes its value by evaluating its function on the values of its $d$ children. The task of {\sc Tree Evaluation} is to determine the value of the root of the tree. To make this a decision problem, we will decide whether the first bit of the root's value equals $1$ or not. 

The obvious depth-first algorithm for {\sc Tree Evaluation} uses $O(d\cdot b \cdot h)$ space, to store intermediate results at each level of the tree. The authors of~\cite{DBLP:journals/toct/CookMWBS12} conjectured that {\sc Tree Evaluation} is not in $\NL$, which would separate $\NL$ from $\P$ (in fact, stronger separations would hold). Recent algorithmic work has led to significant skepticism that this conjecture is true. In a line of work, J.~Cook and Mertz have found surprisingly space-efficient methods for {\sc Tree Evaluation}~\cite{DBLP:conf/stoc/CookM20,DBLP:journals/eccc/CookM21,DBLP:conf/coco/CookM22,DBLP:conf/stoc/CookM24}, culminating in a marvelous algorithm using space only $O(d \cdot b + h \log (d \cdot b))$~(\cite[Theorem 7]{DBLP:conf/stoc/CookM24}, see \Cref{appendix:cm}).

Utilizing the old notion of ``block-respecting'' Turing machines~\cite{DBLP:journals/jacm/HopcroftPV77}, we show how to reduce time-$t$ computations to (implicitly defined) {\sc Tree Evaluation} instances of height $h = \Theta(t/b)$, bit-length $b$, and fan-in $d$, where $b$ is a parameter from $1$ to $t$ that we can choose, and $d$ is a fixed constant depending on the machine. In particular, one can generate arbitrary bits of our {\sc Tree Evaluation} instance in a space-efficient way as needed. 

Very roughly speaking, each level of the tree will correspond to a ``time block'' of $b$ steps, and the value of each node $v$ of the tree will be a ``block of tape'' of length $\Theta(b)$ (a sequence of $\Theta(b)$ contiguous cells) from some particular tape of a $k$-tape Turing machine; the value of the root will contain the contents of the final tape block of the computation, including the relevant accept/reject state. The evaluation function $F_v$ at node $v$ will simulate the Turing machine for $\Theta(b)$ steps, using the contents of $k$ tape blocks of length $\Theta(b)$ from previous time blocks; this $F_v$ can be computed in $O(b)$ time and space, given the relevant contents of $O(b)$-length tape blocks from all $k$ tapes and the state from the previous time block. (The leaves of the tree roughly correspond to strings of $\Theta(b)$ blank symbols, or a block of $\Theta(b)$ symbols from the input $x$.) We end up with a {\sc Tree Evaluation} instance of height $h=O(t/b)$ and fan-in $d = O(1)$, where $\Theta(b)$ steps of the Turing machine are processed at each node, and where each node is labeled with a string of about $\Theta(b)$ bits.  (There are several technical details to worry over, such as the problem of knowing which tape blocks to use at each time step, but this is the high-level idea.)

Observe that, under the above parameter settings, the obvious depth-first {\sc Tree Evaluation} procedure would yield an algorithm running in $\Theta(h \cdot b) = \Theta(t)$ space. Applying the Cook-Mertz procedure, and setting $d \cdot b  = h \log (d \cdot b) = \Theta(t \cdot \log(d \cdot b)/b)$ to optimize the space usage, we find $b = \Theta(\sqrt{t \log t})$ and obtain the $O(\sqrt{t \log t})$ space bound of \Cref{thm:main}.

\paragraph{Organization.} \Cref{sec:preliminaries} discusses preliminaries (which we do not recommend skipping, but it is short). \Cref{sec:main} begins by proving a short ``warm-up'' simulation (\Cref{thm:warmup}) which already achieves $O(\sqrt{t} \log t)$ space and gives many of the key ideas. \Cref{sec:mainmain} proves \Cref{thm:main}, discusses the possibility of improvement, and proves \Cref{thm:d-dimensional}. \Cref{sec:consequences} discusses various corollaries mentioned earlier, and \Cref{sec:discussion} concludes with a number of new directions to consider.

\section{Preliminaries}\label{sec:preliminaries}

We assume the reader is familiar with basic notions in time and space bounded complexity~\cite{Goldreich08,AB09}. There are a few pieces not covered in the usual complexity textbooks which are important for this work.

\paragraph{Robustness of Space Complexity.} It is important to recall that the class of problems $\SPACE[s(n)]$ is robust under changes in the definition of the underlying machine model. For essentially any model (with sequential access versus random access, or tapes versus registers) that measures space usage by the \emph{total number of bits needed to represent the storage}, $\SPACE[s(n)]$ is the same complexity class. This is part of what is known as the ``invariance thesis''~\cite{DBLP:journals/iandc/SlotB88,DBLP:books/el/leeuwen90/Boas90}. As a consequence, we do not have to worry much about machine models when we're trying to design a space $s(n)$ algorithm and we're ignoring time complexity. This allows us to be more lax in our exposition. 

\paragraph{Block-Respecting Turing Machines.} In our main simulation (\Cref{thm:main}) we will utilize a construction to simplify the analysis of multitape Turing machines. The construction was originally used by Hopcroft, Paul, and Valiant~\cite{DBLP:journals/jacm/HopcroftPV77} in their $O(t/\log t)$-space simulation of $t$ time, and it has been used in many subsequent works to ``decompose'' time-bounded computations (such as \cite{DBLP:conf/focs/PaulPST83,DBLP:journals/tcs/KarakostasLV03,DBLP:journals/tcs/KalyanasundaramLRS12,DBLP:journals/cc/LiptonW13,DBLP:conf/coco/MurrayW17}). A \emph{time-$t(n)$ block-respecting multitape Turing machine with blocks of length $b(n)$} has the property that on all inputs of length $n$, the computation is partitioned into $O(t(n)/b(n))$ \emph{time blocks} of length $b(n)$, while each tape is partitioned into $O(t(n)/b(n))$ contiguous \emph{tape blocks} of length $b(n)$. The key property is that for every time block, each tape head is in \emph{exactly one} tape block during that time block, so that tape heads can switch between tape blocks only at the end of a time block. In particular, every tape head only crosses from one tape block to another, on time steps that are integer multiples of $b(n)$. A key lemma of~\cite{DBLP:journals/jacm/HopcroftPV77} is that every multitape Turing machine can be made block-respecting with low overhead. 

\begin{lemma}[\cite{DBLP:journals/jacm/HopcroftPV77}] \label{lem:block-respect}
    For every time-constructible $b(n),t(n)$ such that $\log t(n) \leq b(n) \leq t(n)$ and every $t(n)$-time $\ell$-tape Turing machine $M$, there is an equivalent $O(t(n))$-time block-respecting $(\ell+1)$-tape Turing machine $M'$ with blocks of length $b(n)$.    
\end{lemma}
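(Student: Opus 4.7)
The plan is to construct a block-respecting machine $M'$ by augmenting $M$ with one extra bookkeeping tape, following the Hopcroft--Paul--Valiant template. $M'$ has $\ell+1$ tapes: $\ell$ work tapes that mirror $M$'s tapes (with a slightly enlarged alphabet so that tape-block boundary markers can be recorded), and one additional ``clock'' tape that tracks a step counter modulo $b(n)$. Using time-constructibility of $t(n)$ and $b(n)$, in a preprocessing phase $M'$ writes $b(n)$ in unary on the clock tape and inscribes boundary markers at positions $b(n), 2b(n), \ldots$ on each work tape, up to the potentially-used portion of length $O(t(n))$; this takes $O(t(n))$ total time.

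The core simulation then proceeds step by step, with the clock tape tracking the elapsed $M$-steps within the current time block. To enforce block-respecting behavior, I would adopt the HPV adaptive time-block rule: a time block ends at the first step where either (a) $b(n)$ simulated $M$-steps have elapsed, or (b) some tape head is about to cross a tape-block boundary marker. After an early termination of type (b), the clock resets and a new time block begins. To meet the statement's requirement of uniform time-block length $b(n)$, each early-terminated block is padded on $M'$ with idle steps that only touch the clock tape, at a constant-factor cost. The resulting invariant is exactly the one required: during each time block, every head remains strictly within one tape block.

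The technical heart is the amortized counting argument bounding the total number of time blocks by $O(t(n)/b(n))$, so that $M'$'s running time remains $O(t(n))$. A naive argument fails because a head could oscillate across a boundary every step, triggering an early termination at every step. HPV resolve this with a tape-reblocking device that, upon a boundary crossing, commits the head to the interior of the newly entered block for the remainder of the time block; this ensures $\Omega(b(n))$ simulated steps between consecutive crossings per head, capping early terminations at $O(\ell \cdot t(n)/b(n))$ and hence the total block count at $O(t(n)/b(n))$ for constant $\ell$. This amortization is the main obstacle: precisely specifying the alphabet augmentation, the crossing/commitment rule, and the accompanying clock-tape bookkeeping so that the per-step overhead remains $O(1)$ while the block-respecting invariant is provably maintained is delicate. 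Since the lemma is essentially the HPV construction verbatim, I would follow~\cite{DBLP:journals/jacm/HopcroftPV77} for the full technical details.
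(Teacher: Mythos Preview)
The paper does not supply its own proof of this lemma; it is stated in the preliminaries purely as a citation to~\cite{DBLP:journals/jacm/HopcroftPV77}, so there is nothing in the paper to compare your proposal against. Your sketch is a reasonable outline of the HPV construction and, since you explicitly defer to the original reference for the technical details (exactly as the paper does), that is an acceptable treatment here.

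One remark on the sketch itself: your ``commitment'' device for defeating oscillation is the right concern to flag, but as written it is underspecified---you cannot simply forbid a head from recrossing a boundary without corrupting the simulation of $M$. The standard way to discharge this is to first preprocess $M$ into a machine whose heads move in sweeps of length $\Theta(b(n))$ (using an enlarged alphabet to track $M$'s true head position within the sweep window); this guarantees $\Omega(b(n))$ simulated steps between consecutive boundary crossings per head and makes the amortization go through cleanly. If you intend to spell out the proof rather than cite it, that preprocessing step is what your ``commitment rule'' should become.
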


\paragraph{The Tree Evaluation Problem.} As mentioned earlier, we will reduce time-$t$ computations to the {\sc Tree Evaluation} problem. Our formulation of {\sc Tree Evaluation} will be relaxed from the original notion: we allow a tree of height \emph{at most} $h$, where each inner node $v$ of the tree has $k_v \leq d$ children for some integer $k_v \geq 2$ depending on $v$, and $v$ is labeled with a function from $k_v \cdot b$ bits to $b$ bits. As before, each leaf is labeled with a $b$-bit string, and we wish to determine the first bit of the root's value. Prior work on {\sc Tree Evaluation} assumed a \emph{complete} $d$-ary tree where all root-to-leaf paths have length equal to $h$; in our setting, we allow some paths to have length than $h$, and some nodes to have fewer than $d$ children. However, the algorithm of Cook and Mertz works just as well in our relaxed formulation:

\begin{theorem}[\cite{DBLP:conf/stoc/CookM24}, Theorem 7] \label{thm:cm} {\sc Tree Evaluation} on trees of bit-length $b$, maximum height $h$, and fan-in \emph{at most} $d$, can be computed in $O(d \cdot b + h \log (d \cdot b))$ space.
\end{theorem}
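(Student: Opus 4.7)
The plan is to reduce to the complete, uniform fan-in case covered by the original Cook--Mertz theorem via an implicit padding argument, so that no additional space is charged.

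To handle varying fan-in, I would conceptually replace each internal node $v$ with $k_v < d$ children by a node with exactly $d$ children, appending $d - k_v$ dummy children that evaluate to the all-zeros $b$-bit string, and extending $F_v$ to a function $\tilde F_v$ on $d \cdot b$ bits that ignores the extra arguments. To handle varying depth, I would extend each root-to-leaf path of length $h' < h$ by inserting $h - h'$ unary ``pass-through'' nodes just above the leaf, each of which has one real child and $d - 1$ dummy-leaf children, and whose labelling function returns the $b$-bit value of its real child. Both reductions preserve the first bit of the root value, so the padded instance has the same answer as the original.

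The crux is that this padding can be performed implicitly while running the Cook--Mertz procedure: whenever the algorithm needs the label $F_v$ or the list of children at some node $v$ of the padded tree, the identity of $v$ is determined by the current root-to-node address (which the recursion already maintains), so we can in $O(\log(db))$ additional workspace decide whether $v$ is a real or a padding node and fetch the relevant data from the original specification. Consequently, the Cook--Mertz space bound on complete $d$-ary trees of height $h$, namely $O(db + h \log(db))$, carries over unchanged to the relaxed setting. The $O(db)$ term reflects the catalytic polynomial-interpolation workspace used at each recursive level, and the $O(h \log(db))$ term reflects the $\log(db)$-bit pointer per level times the maximum depth $h$.

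The main obstacle I foresee is not algorithmic but expository: one must verify that the Cook--Mertz procedure is ``oblivious'' in the appropriate sense, that is, that it accesses its tree input only through a localized interface (``give me the function at node $v$'' and ``give me the $i$-th child of $v$'') which is compatible with implicit padding. Given how the algorithm is described in \cite{DBLP:conf/stoc/CookM24} --- as a recursive low-degree polynomial evaluation over a field of size $\mathrm{poly}(db)$ that repeatedly queries the children of a current node and the function label at that node --- this compatibility is essentially automatic, and is really the only point that requires checking.
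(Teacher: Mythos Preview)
Your fan-in padding is identical to the paper's: add dummy leaf children with value $0^b$ and let the node function ignore the extra arguments. The difference is in handling variable depth. The paper does \emph{not} pad depth; instead it opens up the Cook--Mertz recursion and observes that its base case (``if $u$ is a leaf, add $v_u$ to the output register and return'') fires correctly at whatever depth a leaf is met, so a uniform height is never needed and $h$ only enters as an upper bound on the recursion-stack size. Your route---pad to a genuinely complete $d$-ary tree of height $h$ and invoke the original Cook--Mertz statement as a black box---is also valid and arguably more modular, since it avoids re-analyzing the algorithm's internals.

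One wrinkle in your description is worth tightening. The pass-through nodes you insert have $d-1$ dummy-leaf children, and those dummies sit at depths strictly less than $h$; likewise the fan-in dummies you added earlier may sit at shallow depths. So after one pass of each padding the tree is still not complete. The clean fix is to define the padded instance directly as the full complete $d$-ary tree of height $h$: at any address not corresponding to a real node, use the constant-$0^b$ function (for internal addresses) or the value $0^b$ (for depth-$h$ addresses), and at an address that lies on the ``first-child'' chain below a real leaf, use projection-to-first-argument and, at depth $h$, the real leaf's value. This is still implicitly computable in $O(\log(d b))$ extra workspace per node query, exactly as you intend, so the space bound goes through.
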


In Appendix~\ref{appendix:cm}, we give an overview of how the Cook-Mertz algorithm works, and describe why it extends to our case. 

\subsection{More Related Work}

\label{sec:related-work}
As mentioned earlier, Hopcroft, Paul, and Valiant showed that time-$t$ multitape Turing machines can be simulated in $O(t/\log t)$ space~\cite{DBLP:journals/jacm/HopcroftPV77}. This simulation was later extended beyond the multitape model, yielding a more space-efficient simulation for essentially all common models of computation used in algorithms and complexity~\cite{DBLP:journals/jcss/PaulR81,DBLP:journals/mst/HalpernLMW86}. Paterson and Valiant~\cite{DBLP:journals/tcs/PatersonV76} showed that circuits of size $s$ can be simulated by depth $O(s/\log s)$ circuits, implying a space-efficient simulation of circuits. Similarly, Dymond and Tompa~\cite{DBLP:journals/jcss/DymondT85} showed that time $t$ can be simulated in alternating time $O(t/\log t)$. 

For decades, a square-root space simulation of the form in \Cref{thm:main} has been known for \emph{one-tape} Turing machines: Hopcroft and Ullman~\cite{DBLP:journals/jacm/HopcroftU68a} showed that time-$t$ Turing machines with one read-write tape can be simulated in space $O(\sqrt{t})$. Other improvements on this simulation (e.g., improving the time of the simulation, improving the model(s) slightly) include~\cite{DBLP:journals/jcss/Paterson72,DBLP:journals/siamcomp/IbarraM83,DBLP:journals/siamcomp/LiskiewiczL90}. 

Paul, Pippenger, Szemer\'{e}di, and Trotter~\cite{DBLP:conf/focs/PaulPST83} proved the separation $\NTIME[n] \neq \TIME[n]$ for multitape Turing machines. Unfortunately, their proof \emph{only} works for (one-dimensional) multitape Turing machines, and it is infamously still open to prove $\NTIME[n] \neq \TIME[n]$ in more general models. We prove \Cref{thm:d-dimensional} (an extension to the $d$-dimensional case) to illustrate that our approach via {\sc Tree Evaluation} is more broadly applicable.

\section{A More Space-Efficient Simulation of Time}
\label{sec:main}

In this section, we prove \Cref{thm:main}, showing that any multitape Turing machine $M$ running in time $t$ can be simulated in space $O(\sqrt{t \log t})$. We will proceed by reducing the computation of $M$ on an input $x$ (of length $n$) to an instance of {\sc Tree Evaluation} with particular parameters. (In fact, in the full proof of \Cref{thm:main}, we will reduce computing $M$ on $x$ to a series of {\sc Tree Evaluation} instances.)

One initial remark is in order. First, note that we did not specify in the statement of \Cref{thm:main} that the time function $t(n)$ is \emph{time constructible}, in that there is a Turing machine that can print the value of $t(n)$ on $1^n$ in $O(t(n))$ steps. This is because in our space-bounded simulation, we can simply try increasing values $t(n) = n, n+1, n+2, \ldots$, one at a time, and not worry about constructibility issues. (This trick is used in other space-bounded simulations, such as \cite{DBLP:journals/jacm/HopcroftPV77}.) 

Assume $M$ has $\ell$ tapes which are infinite in one direction, and all tape heads are initially at the leftmost cell. We also assume that the input $x$ is received on tape $1$. 

\subsection{A Warm-up Result}

Before describing the full $O(\sqrt{t \log t})$ space simulation, first we present a slightly worse algorithm which uses $O(\sqrt{t} \log t)$ space and requires $t(n) \geq n^2$, but is simpler to describe. This bound is already extremely surprising (to the author), and the simulation gives most of the key intuition as well. 

\begin{theorem}\label{thm:warmup}
 For every function $t(n) \geq n^2$, $\TIME[t(n)] \subseteq \SPACE[\sqrt{t(n)} \log t(n)]$.
\end{theorem}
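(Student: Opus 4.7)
The plan is to instantiate the strategy sketched in Section~1.1 with block length $b = \lceil \sqrt{t} \rceil$. First, I would apply \Cref{lem:block-respect} to replace $M$ with an equivalent block-respecting $(\ell+1)$-tape machine $M'$ running in time $O(t)$ with blocks of length $b$, so that the computation is partitioned into $h = O(\sqrt{t})$ time blocks and each tape into contiguous tape blocks of length $b$. The assumption $t(n) \geq n^2$ gives $n \leq b$, so the input $x$ fits within one initial tape block on tape $1$; this lets me represent the input contribution using a single leaf value without further subdivision.

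Next, I would set up the following implicit {\sc Tree Evaluation} instance. The inner node for time block $i$ has as its value a string of $\Theta(b)$ bits encoding (i) the state of $M'$ and the positions of all tape heads at time $ib$, and (ii) the contents, at time $ib$, of the $\ell+1$ tape blocks visited during time block $i$. Its children are, for each tape $\tau$, the node at the most recent earlier time block that visited the same tape block $\tau$ is currently in at time $(i-1)b$; absent such a block, the child is a leaf holding the initial contents. The function $F_v$ simulates $b$ steps of $M'$ starting from the combined child data, which is an $O(b)$-space computation. Invoking \Cref{thm:cm} with $d = O(\ell) = O(1)$, bit length $\Theta(b) = \Theta(\sqrt{t})$, and height $h = O(\sqrt{t})$ yields space $O(d b + h \log(d b)) = O(\sqrt{t} + \sqrt{t}\log\sqrt{t}) = O(\sqrt{t} \log t)$, as desired.

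The main obstacle is that the \emph{shape} of the tree — which earlier node is a child of which — depends on the trajectory of $M'$'s tape heads, which requires the simulation itself to determine. To stay within budget, I plan to build the shape incrementally into an auxiliary table of size $O(h \cdot \ell \cdot \log t) = O(\sqrt{t} \log t)$ bits recording, for each time block and each tape, the index of the tape block visited. The table is extended one time block at a time: having fixed the shape through time block $i-1$, I invoke \Cref{thm:cm} on the partial instance to recover the $\ell+1$ relevant tape-block contents at time $(i-1)b$, and then directly simulate $b$ steps of $M'$ to obtain the head positions at time $ib$, extending the table. Workspace for \Cref{thm:cm} is reused across the $h$ invocations, so the total space remains $O(\sqrt{t}\log t)$. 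The extra $\log t$ factor (versus the $\sqrt{\log t}$ in \Cref{thm:main}) is paid precisely by this auxiliary table; the sharper bound should come from replacing the table with a more on-the-fly mechanism, which I expect to be the technically delicate step when upgrading this warm-up to \Cref{thm:main}.
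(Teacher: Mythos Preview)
Your approach is correct in outline and reaches the claimed bound, but it takes a genuinely different route from the paper's warm-up. The paper does \emph{not} use block-respecting machines for \Cref{thm:warmup}; instead it first converts $M$ to an \emph{oblivious} two-tape machine via Hennie--Stearns/Pippenger--Fischer (\Cref{thm:oblivious}), paying a $\log t$ factor in running time. Obliviousness makes head positions at any step computable in $\poly(\log t)$ space, so the edges of the computation graph---and hence the entire tree shape---are determinable on the fly in polylog space, and a \emph{single} Cook--Mertz call suffices. Your approach sidesteps obliviousness entirely, reconstructing the tree shape iteratively via repeated Cook--Mertz calls and an $O(\sqrt{t}\log t)$-bit table; this is actually closer in spirit to the paper's full proof of \Cref{thm:main}, which also uses block-respecting machines and stores the computation graph explicitly (there, via the head-movement sequence $\{m_{(h,i)}\}$ in $O(t/b)$ bits, enumerated rather than rebuilt).

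Two small imprecisions are worth flagging. First, your incremental step has a circularity: to ``recover the $\ell{+}1$ relevant tape-block contents at time $(i-1)b$'' already presupposes knowing which blocks are relevant for time block $i$, which is exactly what you are trying to add to the table. The easy fix is that node $i-1$'s value already contains the state, head positions, and block contents for time block $i-1$, so a single further transition step (not $b$ steps) determines which adjacent block each head enters, giving the new table row without needing the full block-$i$ contents. Second, you should add node $i-1$ as an explicit child of node $i$ so that the state and head positions propagate even when every head changes blocks; the paper does this with its edge $(i-1,i)$.

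Finally, your diagnosis of the $\log t$ loss is slightly off: with $b=\sqrt{t}$ the Cook--Mertz term $h\log(db)$ is already $\Theta(\sqrt{t}\log t)$, independent of the table. In fact your table has size $O((t/b)\log t)$, so rebalancing with $b=\sqrt{t\log t}$ gives total space $O(\sqrt{t\log t})$---the bound of \Cref{thm:main} itself, at least for $t\ge n^2$. What the paper's main proof adds beyond your scheme is a compression of the table to $O(t/b)$ bits (storing head-movement deltas in $\{-1,0,1\}$ rather than absolute block indices in $[B]$), which is what lets \Cref{thm:main} drop the hypothesis $t\ge n^2$.
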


The proof of \Cref{thm:warmup} will utilize the well-known \emph{oblivious} two-tape simulation of multitape Turing machines. A multitape Turing machine $M$ is \emph{oblivious} if for every $n$, and every input $x$ of length $n$, the tape head movements of $M$ on $x$ depend \emph{only} on $n$, and not on $x$ itself. 

\begin{theorem}[\cite{DBLP:journals/jacm/HennieS66,DBLP:journals/jacm/PippengerF79,DBLP:journals/jacm/FortnowLMV05}]\label{thm:oblivious}
For every time-$t(n)$ multitape Turing machine $M$, there is an equivalent time-$T(n)$ two-tape Turing machine $M'$ which is oblivious, with $T(n) \leq O(t(n) \log t(n))$. Furthermore, given $n$ and $i \in [T(n)]$ specified in $O(\log t(n))$ bits, the two head positions of $M'$ on a length-$n$ input at time step $i$ can be computed in $\poly(\log t(n))$ time.\end{theorem}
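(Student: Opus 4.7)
The plan is to compose two classical constructions: first, reduce the $\ell$-tape machine $M$ to a two-tape machine $M''$ running in time $O(t(n) \log t(n))$ via a Hennie--Stearns hierarchical block simulation; second, convert $M''$'s data-dependent schedule into a fixed schedule in the spirit of Pippenger--Fischer, yielding the oblivious $M'$. The resulting head trajectory will be a closed-form function of $n$ and $i$, from which the explicit computability claim follows.

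For the first step, I would encode the contents of all $\ell$ tapes of $M$ as parallel tracks on one of $M''$'s tapes, and organize each track into nested ``blocks'' of doubling sizes $1, 2, 4, \ldots, 2^{\lceil \log t(n) \rceil}$ placed symmetrically around a fixed center cell. The invariant is that the currently scanned cell of each simulated tape lives in the innermost (size-$1$) block. A simulated step reads from the innermost blocks; whenever a ``push'' or ``pull'' between levels $j$ and $j+1$ is needed because an inner block is empty or overfull, $M''$ uses the second tape as scratch to shuffle the two levels in $O(2^j)$ time. Since a level-$j$ shuffle can occur at most $t/2^j$ times in $t$ steps, the total overhead is $\sum_j (t/2^j)\cdot O(2^j) = O(t \log t)$, giving $T(n) = O(t(n)\log t(n))$.

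For the second step, I would replace $M''$'s on-demand shuffles with a fixed oblivious schedule: at step $i$, perform a level-$(j,j+1)$ shuffle for every $j$ such that $2^j$ divides $i$, moving data in \emph{both} directions and writing the correct symbols determined by the simulated state of $M$. Inductively, this preserves the invariant that just before step $i$ each level-$j$ block of every track contains exactly the window of $\Theta(2^j)$ cells around the simulated head that might be needed in the next $2^j$ steps, so reads and writes can always be serviced from the innermost block. The physical motion of $M'$'s two heads during this schedule depends only on which levels are triggered at step $i$, which is determined solely by $n$ (for the level count) and the binary representation of $i$; hence $M'$ is oblivious, and the positions of both heads at step $i$ admit arithmetic formulas in the low-order bits of $i$ that can be evaluated in $\poly(\log t(n))$ time.

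The main obstacle is reconciling obliviousness with \emph{correctness}: a naive symmetric schedule could overwrite data that is still needed, so the invariant that the required cell is always in the innermost block at the moment of the simulated step must be checked by carefully matching the shuffle frequency at each level against the maximum head-drift achievable in that many simulated steps. Once this invariant is in place, the time bound, the oblivious property, and the $\poly(\log t(n))$-time head-position computation all follow mechanically from the explicit schedule.
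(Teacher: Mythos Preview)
The paper does not prove this theorem; it is stated as a known result with citations to Hennie--Stearns, Pippenger--Fischer, and Fortnow--Lipton--Markakis--Vishnoi, and is then used as a black box in the warm-up simulation. So there is no ``paper's own proof'' to compare against.

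That said, your sketch is a faithful outline of how the cited references actually establish the result: the nested-block Hennie--Stearns simulation gives the $O(t\log t)$ two-tape bound, and the Pippenger--Fischer idea of replacing on-demand shuffles with a fixed schedule keyed to the $2$-adic valuation of the step counter yields obliviousness together with a closed-form head-position function computable in $\poly(\log t)$ time. Your identification of the main technical point---that the forced-schedule shuffles must maintain the invariant that the needed cell is always available in the innermost block---is exactly the delicate part of the Pippenger--Fischer argument. For the purposes of this paper nothing more is required; the theorem is simply quoted.
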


Let $M'$ be the machine obtained from \Cref{thm:oblivious}. The first idea behind our simulation is to conceptually partition the computation of $M'$ on a length-$n$ input $x$ into time and tape ``blocks'' of length $b(n)$, for a parameter $b(n) \geq \log t(n)$ to be set later. In particular, the two tapes of $M'$ are split into \emph{tape blocks} of $b(n)$ contiguous cells, and the $T(n)$ steps of $M'$ on $x$ are split into $B := B(n) = O(T(n)/b(n))$ contiguous \emph{time blocks} of length up to $b(n)$. Observe that, for any block of $b(n)$ steps on $M'$, and any given tape $h \in \{1,2\}$, there are at most two tape blocks of tape $h$ that may have been accessed during the time block (moreover, they are adjacent tape blocks on tape $h$). We will construct a {\sc Tree Evaluation} instance where the functions at each node of the tree evaluate a single time block, taking as input some relevant tape blocks from previous time blocks. In fact, the same time block may be recomputed many times over the entire {\sc Tree Evaluation} instance, in order to evaluate the last time block at the root of the tree. 

\paragraph{A Computation Graph.} To this end, we define a \emph{computation graph} $G_{M',x}$ on $B+1 = O(T(n)/b(n))$ nodes, a directed acyclic graph whose edges model the information flow from earlier time blocks in the computation to later time blocks: the reads and writes of $M'$ and the head movements across blocks of tape. Our notion is very similar to the computation graphs defined in \cite{DBLP:journals/jacm/HopcroftPV77,DBLP:journals/acta/PaulR80}. Eventually, the computation being performed on the $(B+1)$-node graph $G_{M',x}$ will be viewed as a {\sc Tree Evaluation} instance of height $B+1$. 

Our graph $G_{M',x}$ has a node $i$ for each time block $i \in \{0,1,\ldots,B\}$, and the edges will indicate which previous time block contents need to be read in order to compute the content of the tape blocks accessed during time block $i$. In particular, we say that all tape blocks on the two tapes are \emph{active} during time block $0$, and for $i > 0$, a tape block is \emph{active} during time block $i$ if the tape head visits some cell of the block during time block $i$. We put an edge from $(i, j)$ in $G_{M',x}$ with $i < j$ if and only if:
\begin{itemize}
\item either $j=i+1$, or
\item when $M'$ is run on input $x$, there is some tape block active during time block $i$ that is not active again until time block $j$. That is, for some tape head $h$, it reads the same tape block $C$ in both time blocks $i$ and $j$, but $h$ does not read tape block $C$ during any time blocks $i+1,\ldots,j-1$. (Alternatively, if some tape block is being accessed for the first time in time block $i$, we have an edge $(0,i)$ to reflect the fact that all tape blocks are defined to be active in time block $0$.)
\end{itemize}
Observe that each node $i$ has indegree at most $5$: one edge from $i-1$, and at most four other edges for (at most) four active tape blocks during time block $i$ (two active tape blocks for each of the two tapes).

The key insight behind the computation graph notion is that the information needed to simulate $M'$ during a time block $j$ only requires knowing the information computed in previous time blocks $i$, where $(i,j)$ is an edge in $G_{M',x}$. In particular, the state and head positions of $M'$ at the start of time block $j$ may be obtained from the state and head positions at the end of time block $j-1$, so we have the edge $(j-1,j)$, and we have an edge $(i,j)$ for each of those blocks of tape accessed during a time block $i$ which are not accessed again until time block $j$. 

Due to the obliviousness of $M'$ (\Cref{thm:oblivious}), we have the following claim:

\begin{claim}\label{warmup-claim}
Given the indices of any two nodes $i$, $j$ in $G_{M',x}$, we can determine if there is an edge from $i$ to $j$ in $\poly(\log t(n))$ additional space.
\end{claim}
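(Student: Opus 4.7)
The plan is to exploit the obliviousness of $M'$ guaranteed by \Cref{thm:oblivious}: head positions at any specified step (described in $O(\log t(n))$ bits) can be recomputed on demand in $\poly(\log t(n))$ space, so we never need to store them. The edge condition for $(i,j)$ with $j > i+1$ says that there exist a tape $h \in \{1,2\}$ and a tape block $C$ on tape $h$ such that $C$ is active during time block $i$, active during time block $j$, and inactive during every intermediate time block $i+1, \ldots, j-1$. My strategy is simply to enumerate candidate tape blocks $C$ from time block $i$ and directly verify these three conditions, using nested loops with $O(\log t(n))$-bit counters.

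Concretely, the case $j = i+1$ returns ``yes'' immediately. Otherwise, I iterate over $h \in \{1,2\}$ and over each step $s$ in time block $i$; for each such $s$, I compute the position of tape head $h$ at step $s$ via \Cref{thm:oblivious} and form the candidate tape block index $C := \lfloor p/b(n) \rfloor$. For each candidate $(h,C)$ I then verify (a) that $C$ is active in block $j$, by scanning each step $s'$ of block $j$ and checking whether tape $h$'s head lies in $C$ at step $s'$, and (b) that $C$ is inactive in every intermediate block $k \in \{i+1, \ldots, j-1\}$, by scanning each step $s''$ of block $k$ and confirming that tape $h$'s head never lies in $C$. If some candidate satisfies both conditions I output ``yes''; otherwise ``no''. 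The boundary case $i = 0$ is handled analogously: since every tape block is defined to be active in block $0$, the edge $(0,j)$ exists iff some tape block active in block $j$ fails to be active in every earlier block $1, \ldots, j-1$.

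All loop counters (step indices within a block, block indices, tape block indices) are representable in $O(\log t(n))$ bits, and each head-position query consumes $\poly(\log t(n))$ space by \Cref{thm:oblivious}; because the loops execute sequentially and reuse their workspace, the total additional space is $\poly(\log t(n))$ as claimed. The one subtlety worth flagging is that during a single time block of length $b(n)$ a tape head may visit up to two adjacent tape blocks (its net displacement is bounded by $b(n)$, but it may oscillate across a boundary), so it is not enough to examine only the head positions at the endpoints of a block; iterating over every step of the block captures all visited tape blocks cleanly. I do not anticipate a real obstacle — the argument is essentially a careful space-conscious rewriting of the definition, with \Cref{thm:oblivious} doing all the heavy lifting.
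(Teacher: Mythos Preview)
Your proposal is correct and follows essentially the same approach as the paper: both arguments invoke \Cref{thm:oblivious} to recompute head positions on demand in $\poly(\log t(n))$ space, and then verify the edge condition by tracking which tape blocks are visited between time blocks $i$ and $j$ using only $O(\log t(n))$-bit counters. Your write-up is more explicit about the loop structure and the $i=0$ boundary case, but there is no substantive difference in strategy.
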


The claim follows because determining whether $(i,j)$ is an edge just requires us to keep track of the locations of the two tape heads, from some time block $i$ to a later time block $j$. By \Cref{thm:oblivious}, we can calculate the two tape head locations at any point in time using only $\poly(\log t(n))$ time, so we only have to use $\poly(\log t(n))$ space to keep track of whether a tape block accessed during time block $i$ is only accessed later at time block $j$. 

\paragraph{The Functions at the Nodes.} Now we define what is being computed at each node of $G_{M',x}$. The \emph{content} of time block $i$, denoted by $\text{content}(i)$, is defined as follows:
\begin{itemize}
    \item $\text{content}(0)$ is the initial configuration of $M'$ running on the input $x$ of length $n$, encoded in $n+O(1)$ bits. (Recalling we have $t(n) \geq n^2$, we will eventually set $b(n) \geq n$. Thus the initial configuration of $M'$ on $x$ can ``fit'' in one tape block.)
    
    \item For $i \in \{1,\ldots,B\}$, $\text{content}(i)$ encodes information about the status of $M'$ on $x$ at the \emph{end} of time block $i$: the state of $M'$, the two tape head positions, and a list of the contents of those tape blocks accessed during time block $i$. As there are at most four such tape blocks which may have been accessed during time block $i$, $\text{content}(i)$ can be encoded in $O(b(n) + \log t(n)) \leq O(b(n))$ bits.
\end{itemize}
Note that for every fixed $j \in [B]$, if we are given $\text{content}(i)$ for all edges $(i,j)$ in $G_{M',x}$, then we can compute $\text{content}(j)$ in $O(b(n))$ time and space, by simply simulating $M'$ for $b(n)$ steps on the tape blocks of the relevant $\text{content}(i)$ values.

Our goal is to determine $\text{content}(B)$, the content of the final time block, which will contain either an accept or reject state and determine the answer. 

\paragraph{A Tree Evaluation Instance.} To do this, we construct a {\sc Tree Evaluation} instance in which the root node computes $\text{content}(B)$, its children compute $\text{content}(i)$ for all $i$ such that $(i,B)$ is an edge in $G_{M',x}$, and so on; the leaves of our {\sc Tree Evaluation} instance will compute $\text{content}(0)$, the initial configuration of $M'$ on $x$. This  transformation is analogous to how depth-$d$ \emph{Boolean} circuits of fan-in $F$ can be modeled by formulas of depth-$d$ and size at most $O(F^d)$, by tracing over all possible paths from the output gate of the circuit to an input gate of the circuit; see for example~\cite[Chapter 6]{Jukna}.

More precisely, we define a tree $R_{M',x}$ of height at most $B+1$ and fan-in at most 5, with a root node that will evaluate to $\text{content}(B)$. Each node $v$ of $R_{M',x}$ is labeled by a distinct path from some node $j$ in $G_{M',x}$ to the node $B$ of $G_{M',x}$. Inductively, we define the labels as follows:
\begin{itemize}
\item the root node of $R_{G'}$ is labeled by the empty string $\eps$ (the empty path from $B$ to itself), and
\item for every node $v$ in $R_{M',x}$ labeled by a distinct path $P$ from $j$ to $B$, and for every node $i$ with an edge to $j$ in $G_{M',x}$, the node $v$ has a child $w$ in $R_{M',x}$ labeled by the path which takes the edge $(i,j)$ then the path $P$ to $B$. 
\end{itemize}
Observe that paths $P$ of length $\ell$ from a node $j$ to node $B$ can be encoded in $O(\ell)$ bits, since the indegree of each node is at most $5$. Furthermore, given such a path $P$ encoded in this way, observe we can determine the node $j$ at the start of $P$ in $\poly(\log t)$ space, by making repeated calls to the edge relation (as in \Cref{warmup-claim}).

The desired value to be computed at node $v$ (labeled by a path from $j$ to $B$) is precisely $\text{content}(j)$. For $j=0$, this value is just the initial configuration of $M'$ on $x$, which can be produced immediately in $O(n)$ time and space. For $j > 0$, $\text{content}(j)$ can be computed in $O(b(n))$ time and space given the values of the children of $v$. Since $G_{M',x}$ has at most $B+1$ total nodes, the height of $R_{M',x}$ is at most $B+1$. By induction, the value of the root of $R_{M',x}$ is precisely $\text{content}(B)$. 

While the tree $R_{M',x}$ has $2^{\Theta(B)} \leq 2^{\Theta(T(n)/b(n))}$ nodes, observe that for any given node $v$ of $R_{M',x}$ (labeled by a path to $B$), we can easily compute the labels of the children of $v$ in $\poly(\log t(n))$ space, using \Cref{warmup-claim} to compute the edges of the graph $G_{M',x}$. Thus, it takes only $\poly(\log t(n))$ additional space to determine the children of any given node of $R_{M',x}$, and this space can be immediately reused once these children are determined. So without loss of generality, we may assume we have random access to the tree $R_{M',x}$, its leaves, and its functions at each node. 

Finally, we call the Cook-Mertz {\sc Tree Evaluation} algorithm (\Cref{thm:cm}) in its most general form on $R_{M',x}$. Recall that the space bound of this algorithm, for trees of height at most $h$, fan-in at most $d$, computing $b$-bit values at each node, is 
\[O(d \cdot b + h \log (d \cdot b)).\]
For us, $d = 5$, $b = b(n)$, and $h = O(T(n)/b(n)) \leq O(t(n) \log t(n))/b(n)$. We only use $O(b(n))$ additional time and space for each function call to compute some $\text{content}(j)$, and we can reuse this space for every separate call. Therefore our space bound is optimized up to constant factors, by setting $b(n)$ such that \[b(n)^2 = \Theta(t(n) \log t(n) \cdot \log b(n)),\] so $b(n) = \sqrt{t(n)} \cdot \log t(n)$ suffices. This completes the proof of \Cref{thm:warmup}.

\subsection{The Main Result}
\label{sec:mainmain}

Now we describe how to improve the bound of the space simulation, by avoiding the $t \log t$ blowup of the oblivious simulation. This will establish \Cref{thm:main}. The main difficulty is that without the oblivious guarantee of \Cref{thm:oblivious}, we do not know how to determine the edges of computation graph $G_{M,x}$ in an efficient way. To remedy this, we will use more space: we will enumerate over possible computation graphs $G'$, and introduce a method for checking that $G' = G_{M',x}$ in the functions of our {\sc Tree Evaluation} instance. Our graph enumeration will be performed in a particularly space-efficient way, so that if {\sc Tree Evaluation} turns out to be in logspace, the simulation of this paper will yield an $O(\sqrt{t(n)})$ space bound. 

As before, we start with a multitape $M$ which runs in $t(n)$ time, and let $x$ be an input of length $n$.

First, we make $M$ block-respecting as in \Cref{lem:block-respect} with block-length $b = b(n)$ on inputs of length $n$, for a parameter $b$ to be set later. The new multitape machine $M'$ has $p := \ell+1$ tapes, runs in time $O(t(n))$, and has $B := O(t(n)/b(n))$ time and tape blocks.\footnote{Strictly speaking, we do not have to make $M$ block-respecting, but it does make aspects of the presentation a little cleaner: we do not have to reason about ``active'' tape blocks as we did in the warm-up (\Cref{thm:warmup}). Foreshadowing a bit, we will not use a block-respecting notion in the later extension to $d$-dimensional Turing machines (\Cref{thm:d-dimensional}) and again use ``active'' tape blocks.}

\paragraph{The Computation Graph.} We start by refining the computation graph notion from \Cref{thm:warmup}. Here, our computation graph $G_{M',x}$ is similar but not identical to that of~\cite{DBLP:journals/jacm/HopcroftPV77} and the warm-up \Cref{thm:warmup}. 
Because we will allow for space bounds which are smaller than the input length $n=|x|$, we have to make several modifications to $G_{M',x}$ to fit the specifications of {\sc Tree Evaluation}. We define the set of nodes in $G_{M',x}$ to be
\[S = \{(h,i), (h,0,i) \mid h \in [p], i \in [B]\}.\] Intuitively, each $(h,i) \in [p] \times [B]$ will correspond to the content of the relevant block of tape $h$ after time block $i$, while each $(h,0,i)$ will be a source node in $G_{M',x}$ corresponding to the content of the $i$-th block of tape $h$ when it is accessed for the first time, \emph{i.e.}, the \emph{initial configuration} of the $i$-th block of tape $h$.\footnote{A little explanation may be in order. In the warm-up \Cref{thm:warmup}, we assumed $t(n) \geq n^2$, so that the entire initial configuration of the machine on $x$ could fit in a length-$b(n)$ block. This made the leaf nodes of our {\sc Tree Evaluation} instance $R_{G'}$ particularly easy to describe. For $t(n) \ll n^2$, we may have $|x|=n \ll b(n)$, so the input $x$ may not fit in one tape block. To accommodate this possibility and obtain a clean reduction to {\sc Tree Evaluation} in the end, we define multiple source nodes in $G_{M',x}$ to account for different $b(n)$-length blocks of input $x$ in the initial configuration of $M'$ on $x$, along with $b(n)$-length blocks of all-blank tape when these blocks are accessed for the first time.}
(We imagine that on each tape, the tape blocks are indexed $1,2,\ldots$ starting from the leftmost block, with up to $B$ tape blocks for each tape.) We think of all $(h,0,i)$ nodes as associated with ``time block $0$''.

Each node $(h,i) \in [p] \times [B]$ is labeled with an integer $m_{(h,i)} \in \{-1,0,1\}$, indicating the tape head $h$ movement at the end of time block $i$:
\[m_{(h,i)} = 
\begin{cases}
~1 & \text{ if the head $h$ moves one tape block to the right of the current tape block}, \\
-1 & \text{ if $h$ moves one tape block to the left, and} \\
~0 & \text{ if $h$ stays in the same tape block for both time blocks $i$ and $i+1$.} 
\end{cases}
\]

Next, we describe the edges of $G_{M',x}$; there are two types. For each $h, h' \in [p]$ and $i,j \in [B]$ with $i < j$, the edge $((h',i),(h,j))$ is put in $G_{M',x}$ if either:
\begin{itemize}
    \item 
    $j=i+1$, or
    \item while $M'$ is running during time block $j$, the tape block accessed by $h'$ during time block $i$ is not accessed again until time block $j$. That is, tape head $h'$ reads the same tape block $T$ in both time blocks $i$ and $j$, but head $h'$ does not read tape block $T$ during any of the time blocks $i+1,\ldots,j-1$. 
\end{itemize}
For $h,h' \in [p]$ and $i,j \in [B]$, the edge $((h',0,i),(h,j))$ is put in $G_{M',x}$ if, while $M'$ is running during time block $j$, the tape head $h'$ accesses its $i$-th tape block for the first time in the computation. (For example, note that , for all $h,h'$, $((h',0,1),(h,1))$ is an edge.)

Observe that the indegree of each node $(h,j) \in [p] \times [B]$ is at most $2p$ for all $j > 0$: for all $h' \in [p]$, there is an edge from $(h',j-1)$ to $(h,j)$ for $j > 1$ (and from $(h',0,1)$ to $(h,1)$ for $j=1$), and there is an edge from a node labeled by $h'$ (either of the form $(h',i_{h',j})$ or $(h',0,i_{h',j})$) to $(h,j)$, indicating which block of tape $h'$ is needed to compute the block of tape $h$ during time block $j$. (Note that some of these edges might be double-counted, so the indegree is at most $2p$.)

\paragraph{A Succinct Graph Encoding.} The obvious way to store the computation graph $G_{M',x}$ uses $O(B \log B)$ bits. To save space, we will be more careful, and use additional observations on the structure of such computation graphs. (These observations are similar but not identical to those used in the separation of $\NTIME[n]$ and $\TIME[n]$, of Paul-Pippenger-Szemer\'{e}di-Trotter~\cite{DBLP:conf/focs/PaulPST83}.) 

Recall that each node $(h,i)$ is labeled by a head movement $m_{(h,i)} \in \{-1,1,0\}$ indicating whether the tape block of head $h$ is decremented by $1$, incremented by $1$, or stays the same, at the end of time block $i$. Our key observation is that the numbers $m_{(h,i)}$ alone already tell us the entire structure of the graph $G_{M',x}$. This immediately implies that every possible \emph{guess} of $G_{M',x}$ can be encoded in only $O(B)$ bits: we only need a constant number of bits for each of the $O(B)$ nodes.

In particular, for an index $i \in [B]$, we define $\text{block}(h,i) \in [B]$ to be the index of the tape block of tape $h$ being accessed at the \emph{start} of time block $i$. For all $h \in [p]$, we have $\text{block}(h,1) = 1$, and for $i > 1$, by definition of the $m_{(h,i)}$ we have 
\begin{align}\label{eqn:block_from_m}
\text{block}(h,i) = 1 + \sum_{j=1}^{i-1} m_{(h,j)}.
\end{align} 
Equation \eqref{eqn:block_from_m} is true because each $m_{(h,j)}$ tells us how the index of the tape block of tape $h$ changes at the \emph{end} of time block $j$, which is the same as the tape block index at the \emph{start} of time block $j+1$.

For $i < j$, observe that there is an edge from $(h',i)$ to $(h,j)$ in $G_{M',x}$ if and only if either:
\begin{itemize}
    \item $j=i+1$, or 
    \item $\text{block}(h',i) = \text{block}(h',j)$ and for all $k \in \{i+1,\ldots,j-1\}$, $\text{block}(h',k) \neq \text{block}(h',i)$. (The tape block accessed by $h'$ in time block $i$ is not accessed again until time block $j$.)
\end{itemize}
Furthermore, there is an edge from $(h',0,i)$ to $(h,j)$ in $G_{M',x}$ if and only if $i=\text{block}(h',j)$ and for all $k \in \{1,\ldots,j-1\}$, $\text{block}(h',j) \neq \text{block}(h',k)$. (The tape block accessed by $h'$ in time block $j$ was never accessed before, and its index is equal to $i$.)

We will use a few claims about the $\text{block}$ function and the computation graph.

\begin{claim} \label{claim2} Given $(h',i') \in [p] \times [B]$ and the claimed sequence $\{m_{(h,i)}\}$, we can compute $\text{block}(h',i')$ in $O(\log t(n))$ additional space.
\end{claim}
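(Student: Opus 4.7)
} The plan is to compute $\text{block}(h',i')$ directly from Equation~\eqref{eqn:block_from_m} by streaming through the claimed sequence $\{m_{(h,i)}\}$ rather than materializing anything beyond a running counter. Since the sequence is \emph{given} as part of the input to the procedure (i.e., accessible via random access, as described in the ``succinct graph encoding'' paragraph just above the claim), the additional workspace we need is only whatever is required to iterate an index and maintain a partial sum.

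Concretely, I would initialize two $O(\log B)$-bit registers: an index register $j$, initialized to $1$, and a sum register $S$, initialized to $1$ (the ``$1+$'' from Equation~\eqref{eqn:block_from_m}). I then loop: while $j < i'$, I query the bit(s) encoding $m_{(h',j)}$ from the claimed sequence, add this value (in $\{-1,0,1\}$) to $S$, and increment $j$. When the loop terminates, $S$ contains $1 + \sum_{j=1}^{i'-1} m_{(h',j)} = \text{block}(h',i')$, which I output.

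For the space bound, note that $j$ ranges over $[B]$ with $B = O(t(n)/b(n)) \leq O(t(n))$, so $j$ fits in $O(\log t(n))$ bits. The sum $S$ is always an integer in $[-B,B]$ (since each $m_{(h',j)} \in \{-1,0,1\}$ and the loop adds at most $B$ such values to the initial $1$), so $S$ also fits in $O(\log t(n))$ bits. Accessing a single $m_{(h',j)}$ in the claimed sequence requires only the ability to index into that sequence, which is an $O(\log t(n))$-bit address. All other bookkeeping is constant-size.

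There is no real obstacle here; the claim is essentially a direct unpacking of Equation~\eqref{eqn:block_from_m}. The only subtlety worth stating explicitly is that we do \emph{not} store the prefix of the sequence, nor do we recompute $\text{block}(h',k)$ for smaller $k$ and take differences: we simply stream left-to-right through the $m_{(h',\cdot)}$ values with a single accumulator. Since the claim explicitly allows us to charge only the \emph{additional} space on top of the storage for $\{m_{(h,i)}\}$ and the inputs, the bound $O(\log t(n))$ follows.
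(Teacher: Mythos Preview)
Your proposal is correct and takes essentially the same approach as the paper: stream over the sequence $\{m_{(h',j)}\}$ and maintain a single running counter via Equation~\eqref{eqn:block_from_m}. The paper's own proof is a one-line version of exactly this argument; your write-up simply spells out the register bookkeeping more explicitly.
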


\begin{proof}
    Given any $h',i'$, each $\text{block}(h',i')$ can be computed in logspace using \eqref{eqn:block_from_m}, by maintaining a counter and streaming over the sequence $m_{(h,i)}$.
\end{proof}

\begin{claim} \label{claim} Given the indices of a pair of nodes $u,v$ in $G_{M',x}$, and the claimed sequence $\{m_{(h,i)}\}$, we can determine if $(u,v)$ is an edge in the encoded graph in $O(\log t(n))$ additional space. 
\end{claim}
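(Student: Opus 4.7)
The plan is to directly unpack the edge definition given just before the claim and route every $\text{block}$ computation through \Cref{claim2}, so that the space budget is dominated by a constant number of $O(\log t(n))$-space subroutines plus a few counters.

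First I would split on the type of node $u$. If $u = (h',i)$ and $v = (h,j)$ with $i \geq j$, immediately reject, and if $j = i+1$, immediately accept (these take $O(\log t(n))$ space just to compare indices). Otherwise I would compute $a := \text{block}(h',i)$ and $b := \text{block}(h',j)$ in turn using \Cref{claim2}, overwriting the $O(\log t(n))$ workspace between calls and keeping only $a$ and $b$ persistently; if $a \neq b$ reject, else loop a counter $k$ from $i+1$ to $j-1$, and for each $k$ recompute $\text{block}(h',k)$ via \Cref{claim2}, compare it to $a$, and discard it before moving on. Accept iff no $k$ in the range matched $a$. For $u = (h',0,i)$ and $v = (h,j)$, I would analogously compute $b := \text{block}(h',j)$, check $b = i$, and then loop $k$ from $1$ to $j-1$ recomputing $\text{block}(h',k)$ each time and rejecting if any equals $b$. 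All remaining syntactic cases (e.g., $v$ a source node, or $u = v$) are not edges and can be rejected by inspection.

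For space, each invocation of \Cref{claim2} uses $O(\log t(n))$ additional space, which is reclaimed before the next call. The persistent state across the loop consists of the loop counter $k$, the fixed values $a, b \in [B]$, and the indices of $u$ and $v$; since $B \leq t(n)$, all of these fit in $O(\log t(n))$ bits. Streaming access to the guessed sequence $\{m_{(h,i)}\}$ does not itself cost extra workspace, by the same conventions used in \Cref{claim2}. Hence the entire procedure runs in $O(\log t(n))$ additional space.

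The only mildly subtle point, which I would call out but not dwell on, is that it would be tempting to tabulate $\text{block}(h',k)$ for all $k \in \{i+1,\ldots,j-1\}$ to avoid redundant work; this would blow the space bound. Recomputing $\text{block}(h',k)$ from scratch for each $k$ via \Cref{claim2} is wasteful in time but is exactly what keeps the additional space at $O(\log t(n))$, and time is not what we are optimizing here. Given \Cref{claim2} the rest is essentially bookkeeping, so I do not expect any real obstacle.
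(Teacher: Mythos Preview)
Your proposal is correct and follows essentially the same approach as the paper's proof, which simply observes that $j=i+1$ is checkable in logspace and that by \Cref{claim2} all the $\text{block}$ values needed for the three edge conditions can be computed in logspace. You have spelled out the case analysis and space accounting in more detail than the paper does, but the underlying idea is identical.
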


\begin{proof} We can easily check if $j=i+1$ in logspace. By \Cref{claim2}, we can compute $\text{block}(h',j)$ for any $h',j$ in logspace as well. Therefore the three possible conditions for an edge in the computation graph $G_{M',x}$ can each be checked in logspace.
\end{proof}

To summarize, we can encode $G_{M',x}$ in $O(B)$ bits, and given such an encoding, we can determine any desired edge $(u,v)$ of the graph in logspace.

\paragraph{Values of Nodes.} Similarly to our warm-up \Cref{thm:warmup}, we define contents for each of the nodes of $G_{M',x}$. For all $h \in [p]$ and $i \in [B]$, we define $\text{content}(h,0,i)$ for the source nodes $(h,0,i)$ as follows:
\begin{itemize}
\item If $h > 1$, then we are reading a tape $h$ that does not contain the input. In this case, $\text{content}(h,0,i)$ is defined to be \emph{all-blank tape content}: $b(n)$ blanks, with the tape head at the leftmost cell of the block, and head position equal to $(i-1) \cdot b(n)$. (Note that, assuming we start numbering tape cells at $0$, $(i-1) \cdot b(n)$ is the leftmost cell of the $i$-th block of tape.)
\item If $h=1$, then we may need to read portions of the input $x$ of length $n$. If $i > \lceil n/b(n) \rceil$, then the $i$-th tape block of tape $1$ does not contain any symbol of $x$, and $\text{content}(1,0,i)$ is defined to be all-blank tape content as above. Otherwise, if $i \leq \lceil n/b(n) \rceil$, then $\text{content}(1,0,i)$ is defined to be the relevant $b(n)$-length substring of $x$ (possibly padded with blanks at the end) with the tape head at the leftmost cell of the block, head position equal to $(i-1) \cdot b(n)$, and the initial state of $M'$ included if $i=0$.
\end{itemize}

Note that the source nodes of our $G_{M',x}$ will eventually be the leaf nodes of the {\sc Tree Evaluation} instance; in the above, we are also defining the values of those leaves.

For $h \in [p]$ and $i \in [B]$, we define $\text{content}(h,i)$ of node $(h,i)$ similarly as in \Cref{thm:warmup}: it is a string encoding the pair $(h,i)$, the state of $M'$ and the head position of tape $h$ at the end of time block $i$, and the content of the tape block read by head $h$ at the end of time block $i$. With a reasonable encoding, $\text{content}(h,i)$ can be represented in $O(b(n) + \log t(n)) \leq O(b(n))$ bits, and our computation graph $G_{M',x}$ has been set up (just as in \Cref{thm:warmup}) so that given the strings $\text{content}(u)$ for all nodes $u$ with edges to $(h,j)$, the string $\text{content}(h,j)$ can be computed in $O(b(n))$ time and space.

\paragraph{Computation Graph Enumeration.} In what follows, we enumerate over all possible $O(t(n)/b(n))$-bit choices $G'$ of the encoding of the computation graph $G_{M',x}$ on $O(B) \leq O(t(n)/b(n))$ nodes. For each $G'$, we construct a {\sc Tree Evaluation} instance $R_{G'}$ based on $G'$, which will attempt to simulate $M'$ on $x$ assuming $G' = G_{M',x}$. We will set up $R_{G'}$ so that the following conditions hold:
\begin{itemize}
    \item If $G' \neq G_{M',x}$, this means that at the end of some time block $i$, some tape head $h$ of $M'$ on $x$ moves inconsistently with the guessed label $m_{(h,i)}$ in $G'$. In this case, evaluating $R_{G'}$ will result in a special {\tt FAIL} value at the root. 
    \item If $G' = G_{M',x}$, then $R_{G'}$ will evaluate to $\text{content}(1,B)$ at the root, which will include either an accept or reject state at the end of the computation of $M'$ on $x$. Thus we will be able to conclude decisively whether $M$ accepts or rejects $x$ after this call to {\sc Tree Evaluation}.
    \end{itemize}

Finally, if we exhaust all computation graphs $G'$ and always receive {\tt FAIL} from all {\sc Tree Evaluation} calls, we then increase our guess of $t(n)$ by $1$ (starting from $t(n)=n$), and restart the entire process. (Recall that we do not assume $t(n)$ is constructible.) Eventually, we will choose an appropriate $t(n)$, in which case the above enumeration of graphs $G'$ will result in either acceptance or rejection.

\paragraph{The Functions At The Nodes.} We now define the functions at the nodes of our {\sc Tree Evaluation} instance $R_{G'}$. These functions will allow us to detect when the current graph $G'$ we are considering has an error. We have to be a little careful here, as the Cook-Mertz algorithm is only guaranteed to produce the value of the \emph{root} of a given tree, and not the values of any intermediate nodes along the way. (Indeed, the values of other nodes may become quite ``scrambled'' from evaluating a low-degree extension of the functions involved.)

For each tape $h \in [p]$ and time block $i \in [B]$, we define a \emph{time block function} $F_{h,i}$ which attempts to simulate $M'$ over time block $i$, and to output $\text{content}(h,i)$, the content of the relevant block of tape $h$ at the end of time block $i$. 

First, we choose an encoding of $\text{content}(h,i)$ strings so that there is also a special {\tt FAIL} string of length $O(b(n))$, which is distinct from all valid content strings.
\begin{itemize}
\item The {\bf input} to $F_{h,i}$ consists of up to $2p$ strings of length $O(b(n))$. Some strings may be $O(b(n))$-bit {\tt FAIL} strings. In the case where $G' = G_{M',x}$, $p$ of the input strings have the form $\text{content}(h',i-1)$ (or $\text{content}(h',0,1)$ if $i=1$) for all $h' \in [p]$. These content strings contain the index of the node in $G'$ they correspond to, the state $q$ of $M'$ at the start of time block $i$, the content of the relevant tape blocks at the end of time block $i-1$, and the head positions of all $p$ tapes at the start of time block $i$, where each head position is encoded in $O(\log b(n))$ bits. When some of the tape blocks accessed in time block $i$ were not accessed in the previous time block, there are also (up to) $p$ other strings $\text{content}(u_1),\ldots,\text{content}(u_p)$ which contain tape block content $c_1,\ldots,c_p$ for each of the $p$ tapes at the start of time block $i$. 

\item The {\bf output} of $F_{h,i}$ is defined as follows. First of all, if some input string is a {\tt FAIL} string, then  $F_{h,i}$ immediately outputs {\tt FAIL} as well. In this way, a single {\tt FAIL} detection at any node will propagate to the root value of $R_{G'}$.

Assuming no {\tt FAIL} strings have been given as input, $F_{h,i}$ attempts to simulate $M'$ for time block $i$, using the given content strings. While simulating, $F_{h,i}$ checks that for all $h' \in [p]$, the tape head $h'$ moves consistently with the integer $m_{(h',i)} \in \{0,1,-1\}$. In particular, if $m_{(h',i)} = -1$ then it checks tape head $h'$ moves to its left-adjacent tape block at the end of time block $i$, if $m_{(h',i)} = 1$ then it checks $h'$ moves to its right-adjacent tape block, and if $m_{(h',i)} = 0$ then it checks $h'$ remains in the same tape block. If all heads $h'$ move consistently with the integers $m_{(h',i)}$, then the output of $F_{h,i}$ is set to be $\text{content}(h,i)$. Otherwise, the output of $F_{h,i}$ is {\tt FAIL}.

\end{itemize}
Observe that $F_{h,i}$ can be computed in $O(b(n))$ time and space, by simply simulating $M'$ for $b(n)$ steps, starting from the contents $c_1,\ldots,c_p$ for each of the tapes, and the state and head information given. Furthermore, by setting $b'(n) = \Theta(b(n))$ appropriately, we may think of each $F_{h,i}$ as a function from an ordered collection of $2p$ separate $b'(n)$-bit strings, to a single $b'(n)$-bit string. These will be the functions at the nodes of our {\sc Tree Evaluation} instance.

\paragraph{Construct a Tree Evaluation Instance.} Observe that the depth of every $G'$ is at most $B+1$: for every edge $(u,v)$ in $G'$, the time block of $u$ is always smaller than the time block of $v$, and there are $B+1$ time blocks. For each possible computation graph $G'$, we will construct an equivalent and implicitly-defined {\sc Tree Evaluation} instance $R_{G'}$ of height at most $B+1 \leq O(t(n)/b(n))$, such that the edge relation of the tree can be determined in small space. The idea is analogous to that described in the warm-up (\Cref{thm:warmup}); for completeness, we will be a little more formal than \Cref{thm:warmup}. 

Recall that each candidate $G'$ is defined so that each non-source node $(h,i)$ has indegree at most $2p$. Let $V$ be the set of all $O((2p)^{B+1})$ sequences of the form $h_1 \cdots h_{\ell}$ for all $\ell = 0,\ldots,B$, where each $h_i \in [2p]$, and let $\eps$ denote the empty string (also in $V$). The nodes of $R_{G'}$ will be indexed by sequences in $V$. For all $\ell = 0,\ldots,B-1$, each node $h_1 \cdots h_{\ell} \in V$ has at most $2p$ children $h_1 \cdots h_{\ell} h_{\ell+1}$ for some $h_{\ell+1} \in [2p]$.

Each node of $R_{G'}$ is directly associated with a path from a node $v$ to the node $(1,B)$ in the guessed graph $G'$, as follows. 
\begin{itemize}
\item The node $\eps \in V$ corresponds to the node $(1,B)$ in $G'$ (tape $1$, in the last time block), and we associate the function $F_{1,B}$ with this node. 

\item Inductively assume the node $h_1 \cdots h_{\ell} \in V$ corresponds to some node $(h,j)$ in $G'$. We associate the function $F_{h,j}$ with this node. 

For every $h' \in \{1,\ldots,p\}$, the node $h_1 \cdots h_{\ell} h' \in V$ corresponds to a node $(h',i)$ (or $(h',0,i)$) in $G'$ with an edge to $(h,j)$, for some $i$. This corresponds to the case where the tape block of tape $h'$ accessed in time block $i$ is accessed later in time block $j$ (or when $h'$ is reading tape block $i$ for the first time in time block $j$).

For every $h' \in \{p+1,\ldots,2p\}$, let $h'' = h'-p$, so $h'' \in [p]$. The node $h_1 \cdots h_{\ell} h' \in V$ corresponds to the node $(h'',j-1)$ with an edge to $(h,j)$ when $j>1$, and the node $(h'',0,1)$ with an edge to $(h,j)$ when $j=1$. This corresponds to the case where the state and head position of tape $h''$ at the end of time block $j-1$ is passed to the start of time block $j$, and to the case where the initial state and head position is passed to the start of time block $1$. 

If there is an edge from $(h',0,i)$ to $(h,j)$ for some $i$, then the tape head $h'$ has never previously visited the tape block $i$ that is used to compute time block $j$. In that case, we set $h_1 \cdots h_{\ell} h'$ to be a \emph{leaf node} in the tree. The value of that leaf node is set to be $\text{content}(h',0,i)$.

\end{itemize}

\paragraph{Finishing up.} We call {\sc Tree Evaluation} on $R_{G'}$. If the current guessed $G'$ is not equal to $G_{M',x}$, then some guessed integer $m_{(h,i)} \in \{-1,1,0\}$ is an incorrect value (recall that $G'$ is specified entirely by the integers $\{m_{(h,i)}\}$). This incorrect head movement will be detected by the function $F_{h,i}$, which will then output {\tt FAIL}. This {\tt FAIL} value will propagate to the root value of $R_{G'}$ by construction. When $R_{G'}$ evaluates to {\tt FAIL}, we move to the next possible $G'$, encoded in $O(B)$ bits.

Assuming the current graph $G'$ is correct, \emph{i.e.}, $G' = G_{M',x}$, then the value of the root of $R_{G'}$ is $\text{content}(1,B)$, the output of $F_{1,B}$ on the final time block $B$. This value contains the correct accept/reject state of $M'$ on $x$. This follows from the construction of the functions $F_{h,i}$, and can be proved formally by an induction on the nodes of $G'$ in topological order. Therefore if our {\sc Tree Evaluation} call returns some content with an accept/reject state, we can immediately return the decision. 

\paragraph{Space Complexity.} Observe that, by \Cref{claim2} and \Cref{claim}, any bits of the {\sc Tree Evaluation} instance $R_{G'}$ defined above can be computed in $O(B)$ space, given the computation graph $G'$ encoded in $O(B)$ space. In particular, given the index of a node of $R_{G'}$ of the tree as defined above (as a path from a node $v$ to $(1,B)$ in $G'$), we can determine the corresponding node of $G'$ and its children in $O(B)$ space.

Therefore, we can call the Cook-Mertz algorithm (\Cref{thm:cm}) on this implicitly-defined instance of {\sc Tree Evaluation}, using $O(B)$ space plus $O(d' \cdot b' + h' \cdot \log(d \cdot b'))$ space, where
\begin{itemize}
\item $b' = \Theta(b(n))$ is the bit-length of the output of our functions $F_{h,i}$, 
\item $d' = 2p = \Theta(1)$, the number of children of each inner node, and
\item $h' = B = \Theta(t(n)/b(n))$, the maximum height of the tree.
\end{itemize}
At each node, each call to one of the functions $F_{h,i}$ requires only $O(b(n))$ space. Therefore the overall space bound is \[O\left(b(n) + \frac{t(n)}{b(n)}\cdot \log(b(n))\right).\] Setting $b(n) = \sqrt{t(n) \log t(n)}$ obtains the bound $O(\sqrt{t(n) \log t(n)})$. This completes the proof of \Cref{thm:main}.

Given that we have reduced to {\sc Tree Evaluation}, it may be instructive to think about what is happening in the final algorithm, conceptually. At a high level, for our instances of {\sc Tree Evaluation}, the Cook-Mertz procedure on $R_{G'}$ uses $O((t(n)/b(n)) \cdot \log b(n))$ space to specify the current node $v$ of $G'$ being examined (given by a path from that node $v$ to the node $(1,B)$) as well as an element from a field of size $\poly(b(n))$ for each node along that path. The algorithm also reuses $O(b(n))$ space at each node, in order to compute low-degree extensions of the function $F_{h,i}$ by interpolation over carefully chosen elements of the field. For our setting of $b(n)$, we are using equal amounts of space to store a path in the graph $G'$ labeled with field elements, and to compute a low-degree extension of the ``block evaluation'' function $F_{h,i}$ at each node (with aggressive reuse of the latter space, at every level of the tree).

\subsection{On Possibly Removing the Square-Root-Log Factor}
\label{sec:log}

We observe that, if {\sc Tree Evaluation} turns out to be in ${\sf LOGSPACE} = \SPACE[\log n]$, then we would obtain a simulation that runs in $O(b' + t(n)/b')$ space when the number of children is upper bounded by a constant. This is due to our succinct encoding of the computation graph, which only needs $O(b')$ space to be stored. Setting $b' = \sqrt{t(n)}$, this would remove the pesky $O(\sqrt{\log t})$ factor from our space bound above:

\begin{corollary}
    If {\sc Tree Evaluation} is in ${\sf LOGSPACE}$, then $\TIME[t(n)] \subseteq \SPACE[\sqrt{t(n)}]$.
\end{corollary}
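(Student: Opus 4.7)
The plan is to follow exactly the same reduction used in the proof of \Cref{thm:main}, but replace the Cook-Mertz invocation with the hypothesized LOGSPACE algorithm for {\sc Tree Evaluation}. Fix a block-length parameter $b = b(n)$. As before, convert $M$ to a block-respecting machine $M'$ with blocks of length $b$; then enumerate over all candidate computation graphs $G'$, each encoded in $O(B) = O(t(n)/b)$ bits via the succinct encoding given by the head-movement labels $\{m_{(h,i)}\}$. For each $G'$, conceptually form the implicit {\sc Tree Evaluation} instance $R_{G'}$ of maximum height $h = B$, fan-in $d = 2p = O(1)$, and bit-length $\Theta(b)$. The graphs $G'$ are tried one at a time, reusing the same $O(B)$ workspace, until one evaluates to a non-{\tt FAIL} value.

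Next, suppose $A$ is an algorithm deciding {\sc Tree Evaluation} in space $O(\log N)$ on explicit inputs of size $N$. For a tree of constant fan-in, height $h$, and bit-length $b$, an explicit description (tree structure, leaf values, and inner-node function tables) has size $N \le O(d^{h} \cdot 2^{db} \cdot b)$, so $\log N = O(h + b)$. Plugging in the parameters for $R_{G'}$, the raw workspace used by $A$ is $O(B + b) = O(t(n)/b + b)$.

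The remaining step is to compose $A$ with implicit access to $R_{G'}$. Whenever $A$ queries a bit of the tree — a child pointer, a leaf value, or an entry of some function table $F_{h,i}$ — we supply that bit on demand: \Cref{claim2} and \Cref{claim} show that the edge relation and node labels of $R_{G'}$ can be recovered from the $O(B)$-bit encoding of $G'$ in $O(\log t(n))$ additional space, and any output value of $F_{h,i}$ on a given $2p$-tuple of content strings can be computed in $O(b)$ space by simulating $M'$ for one time block. Since deterministic space composes additively under subroutine calls and the subroutine workspace can be reused across queries, the total space is
\[
O\!\left(B + b + \log t(n)\right) \;=\; O\!\left(\tfrac{t(n)}{b} + b\right).
\]
Setting $b = \sqrt{t(n)}$ balances the two terms and yields the desired $O(\sqrt{t(n)})$ bound.

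The step I would be most careful about is the composition argument: one must check that a LOGSPACE algorithm for explicit {\sc Tree Evaluation} really does compose cleanly with an implicit-input subroutine, rather than, say, relying on random-access storage of the full function tables. This is standard for space-bounded reductions (the composed space is the sum of the outer algorithm's space and the subroutine's space, since each query recomputes its answer from scratch in reusable memory), but it is the only subtlety that distinguishes this argument from a routine parameter substitution in the proof of \Cref{thm:main}.
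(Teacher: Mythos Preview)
Your proposal is correct and follows exactly the approach the paper sketches: reuse the reduction of \Cref{thm:main}, note that the succinct $O(B)$-bit encoding of $G'$ together with a LOGSPACE {\sc Tree Evaluation} algorithm gives total space $O(t(n)/b + b)$, and set $b=\sqrt{t(n)}$. In fact you spell out more detail (the $\log N = O(h+b)$ calculation and the space-composition argument for implicit inputs) than the paper's one-paragraph remark does.
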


There may be a path to removing the $\sqrt{\log t}$ factor that does not require solving {\sc Tree Evaluation} in logspace. The simulations of time $t$ building on Hopcroft-Paul-Valiant~\cite{DBLP:journals/jacm/HopcroftPV77,DBLP:journals/tcs/PatersonV76,DBLP:journals/jcss/PaulR81,DBLP:journals/jcss/DymondT85,DBLP:journals/mst/HalpernLMW86}, which save a $\log t$ factor in space and alternating time, utilize strategies which seem orthogonal to our approach via {\sc Tree Evaluation}. Roughly speaking, there are two kinds of strategies: a pebbling strategy on computation graphs which minimizes the total number of pebbles needed~\cite{DBLP:journals/jacm/HopcroftPV77,DBLP:journals/jcss/PaulR81}, and a recursive composition strategy which cuts the graph into halves and performs one of two recursive approaches based on the cardinality of the cut~\cite{DBLP:journals/tcs/PatersonV76,DBLP:journals/jcss/DymondT85,DBLP:journals/mst/HalpernLMW86}.\footnote{There is even a third strategy, based on an ``overlap'' argument~\cite{DBLP:journals/mst/AdlemanL81}.} Neither of these seem comparable to the Cook-Mertz algorithm. However, so far we have been unable to merge the {\sc Tree Evaluation} approach and the other strategies. The following hypothesis seems plausible:
\begin{hypothesis} For ``reasonable'' $b(n)$ and $t(n)$, time-$t(n)$ multitape computations can be space-efficiently reduced to {\sc Tree Evaluation} instances with constant degree $d$, height $O(t(n)/b(n))/\log(t(n)/b(n))$, and functions from $O(b)$-bits to $b$-bits at each inner node.
\end{hypothesis}
Results such as \cite{DBLP:journals/tcs/PatersonV76,DBLP:journals/jcss/DymondT85} which show that $\TIME[t] \subseteq {\sf ATIME}[t/\log t]$ and that circuits of size $t$ can be simulated in depth $O(t/\log t)$, prove that the hypothesis is actually true for $b = \Theta(1)$. If the hypothesis is also true for $b = \sqrt{t}$, then we could conclude $\TIME[t] \subseteq \SPACE[\sqrt{t}]$, by applying Cook and Mertz (\Cref{thm:cm}). 

\subsection{Extension to Higher Dimensional Tapes}

The simulation of \Cref{thm:main} extends to Turing machines with higher-dimensional tapes. The proof is very similar in spirit to \Cref{thm:main}, but a few crucial changes are needed to carry out the generalization.

\begin{reminder}{\Cref{thm:d-dimensional}}
    Every decision problem solvable by a $t(n)$-time $d$-dimensional multitape Turing machine can be decided in $O((t \log t)^{1-1/(d+1)})$ space.    
\end{reminder}

\begin{proof} (Sketch) As there is no $d$-dimensional version of block-respecting Turing machines, we have to take a  different approach. Similarly to \Cref{thm:warmup} and Paul-Reischuk~\cite{DBLP:journals/acta/PaulR80}, we upper-bound the number of tape blocks that may be relevant to any given time block, and use that information to construct a series of {\sc Tree Evaluation} instances.

Suppose our time-$t(n)$ Turing machine $M$ has $p$ tapes which are $d$-dimensional, and we wish to simulate $M$ on an input $x$ of length $n$. We assume $x$ is written on the first tape in a single direction starting from the cell indexed by $(0,\ldots,0) \in {\mathbb N}^d$. (For concreteness, for $i=1,\ldots,n$ we may assume the $i$-th bit of $x$ is written in the cell $(0,\ldots,0,i-1) \in {\mathbb N}^d$.)

For a parameter $c$, we partition the time $t(n)$ into time blocks of length $c$, so there are $B = \lceil t(n)/c \rceil$ total time blocks. Besides time blocks $1,\ldots,B$, we also define a time block $0$ (similarly to \Cref{thm:warmup} and \Cref{thm:main}). Each $d$-dimensional tape is partitioned into tape blocks of $b = c^d$ contiguous cells: in particular, each tape block is a $d$-dimensional cube with side-length $c$ in each direction. Observe that each tape block has up to $3^d-1$ adjacent tape blocks in $d$ dimensions. We define a tape block $T$ to be \emph{active} in time block $i > 1$ if some cells of $T$ are accessed during time block $i$, and all tape blocks are defined to be active in time block $0$. Since each time block is only for $c$ steps and each tape block has side-length $c$ in each direction, observe that for each $i \in [B]$ and each tape $h \in [p]$, there are at most $2^d$ blocks of tape $h$ that may be active during time block $i$. Therefore across all $p$ tapes, the total number of active blocks is at most $2^d \cdot p$. Note that each active tape block of tape $h$ during time block $i$ can be indexed by some vector in $\{-1,0,1\}^d$ indicating its position relative to the tape block in which tape $h$ started the time block.

Similar to the proofs of \Cref{thm:warmup} and \Cref{thm:main}, we define a computation graph $G_{M,x}$. The set of nodes will be the set $S = \{(h,i) \mid h \in [p], i \in [B]\}$ unioned with a subset  $T \subseteq \{(h,0,v) \mid h \in [p], v \in \N^d\}$, where $|T|\leq p \cdot B$. 

Each node will have a content value as before. The $\text{content}(h,0,v)$ nodes will store a portion of the input, or the all-blank content, depending on the vector $v \in \N^d$. The vector $v$ gives the \emph{index} of a block of tape $h$. (In the one-dimensional case, the tape blocks were simply indexed by $\N$.) The $\text{content}(h,i)$ nodes will store information at the \emph{end} of time block $i$: the state of $M'$, the head position of tape $h$, and a list of the contents of those blocks of tape $h$ that are active during time block $i$. Note that 
$\text{content}(h,i)$ can be encoded in $O(2^d \cdot b(n))$ bits. 

As before, we label each node $(h,i) \in [p] \times [B]$ to indicate the head movements between time blocks, but our labels are more complicated than before. We use a $d$-dimensional vector $m_{(h,i)} \in \{-1,0,1\}^d$ to indicate the \emph{difference} between the index of the tape block $u_{(h,i)} \in \N^d$ that tape head $h$ is in at the start of time block $i$, and the index of the tape block $v_{(h,i)} \in \N^d$ that head $h$ is in at the end of time block $i$. We have $v_{(h,i)} - u_{(h,i)} \in \{-1,0,1\}^d$, since every time block takes $c$ steps and the side-length of a tape block is $c$ cells. We also label each node with a list $L_{(h,i)}$ of up to $2^d$ other vectors in $\{-1,0,1\}^d$, describing the indices of all other blocks of tape $h$ that are active in time block $i$. Observe that the vectors $m_{(h,i)}$ and the lists $L_{(h,i)}$ over all nodes can be encoded in $O(d \cdot 2^d \cdot B) \leq O(B)$ bits. Moreover, given the vectors $m_{(h,i)}$ and the lists $L_{(h,i)}$, we can reconstruct any $v_{(h,i)}$ and $u_{(h,i)}$ in only logspace by maintaining counters.

As before, we put an edge from $(h,i)$ to $(h',j)$ if either 
\begin{itemize}
\item $j=i+1$, or
\item $i < j$ and there is some tape block active on tape $h$ in both time blocks $i$ and $j$ that is not active for all time blocks $i+1$ through $j-1$.
\end{itemize}
We put an edge from $(h,0,v)$ to $(h',j)$ if during time block $j$, the tape head $h$ accesses the tape block indexed by $v$ for the first time in the computation.

We observe that the indegree of each $(h,i)$ is at most $(2^d+1) \cdot p$: there are at most $2^d$ active tape blocks for each of the $p$ tapes, each of which may require information from a different previous node, and there are also $p$ edges from the previous time block providing the state and head positions at the start of time block $i$. 

Generalizing \Cref{claim2} and \Cref{claim}, we claim that the edges of $G_{M,x}$ can be determined in logspace, given the vectors $m_{(h,i)}$ and the lists $L_{(h,i)}$ encoded in $O(B)$ bits. We enumerate over all possible computation graphs $G'$, using this encoding. Note that the encoding also allows us to determine which source nodes $(h,0,v)$ appear in $G'$, by tracking the tape head movements as claimed by the vectors $m_{(h,i)}$ and the lists $L_{(h,i)}$, and noting when a tape head $h$ enters a new block that has not been accessed before.

For each node $(h,i)$, the evaluation function $F_{h,i}$ is defined similarly as in the proofs of \Cref{thm:warmup} and \Cref{thm:main}. Given all the necessary information at the start of a time block: the state $q$, the contents of the (at most) $2^d \cdot p$ active blocks, and each of the $p$ head positions encoded in $O(d \log t)$ bits, the function $F_{h,i}$ computes $\text{content}(h,i)$: it simulates $M$ for $c$ steps on the active blocks, then outputs updated contents of all $O(2^d)$ active blocks on tape $h$, in $b' \leq O(2^d \cdot c^d + 2^d \cdot d \log t)$ bits for some $b' = \Theta(c^d)$, including the head position of tape $h$ at the end of the time block and the state $q'$ reached.  

As before, $F_{h,i}$ checks that the claimed active tape blocks in $L_{(h,i)}$ and the claimed vector $m_{(h,i)}$ are consistent with the simulation of time block $i$; if they are not, then $F_{h,i}$ outputs {\tt FAIL} and we design the $F_{h,i}$ as before so that a single {\tt FAIL} value is propagated to the root of $R_{G'}$. We can think of each $F_{h,i}$ as a mapping from $(2^d+1) \cdot p \cdot b'$ bits to $b'$ bits, where we allow a special encoding to ``pad'' the input and output when the number of active blocks on some tape is less than $2^d$. 

Finally, for each guessed graph $G'$ we define a {\sc Tree Evaluation} $R_{G'}$ instance analogously as in the proof of \Cref{thm:main}. The root of the tree corresponds to the node $(1,B)$, where we wish to check if $\text{content}(1,B)$ contains the accept or reject state. The children of a given node in the tree $R_{G'}$ are determined directly by the predecessors of the corresponding node in $G_{M,x}$. The leaves correspond to the nodes $(h,0,v)$ in $G_{M,x}$ such that $\text{content}(h,0,v)$ contains either an all-blank $d$-dimensional cube of $c^d$ cells, or a $d$-dimensional cube of $c^d$ cells which includes up to $c$ symbols of the input $x$ and is otherwise all-blank.

As before, if a call to {\sc Tree Evaluation} for some $R_{G'}$ ever returns an answer other than {\tt FAIL}, we return the appropriate accept/reject decision. If all calls {\tt FAIL}, we increment the guess for the running time $t(n)$ and try again.

Our {\sc Tree Evaluation} instance $R_{G'}$ has height at most $B = \lceil t(n)/c \rceil$, where each inner node has at most $(2^d +1)\cdot p$ children, working on blocks of bitlength $2^d \cdot b' \leq O(c^d + d \log t)$. Applying the Cook-Mertz algorithm for {\sc Tree Evaluation}, recalling that $d$ and $p$ are both constants, and including the $O(B)$ bits we use to encode a candidate graph $G'$, we obtain a simulation running in space 
\[O\left(c^d + d \log t + \frac{t(n)}{c} \cdot \log(c^d)\right) \leq O\left(c^d + \frac{t(n)}{c} \cdot \log t(n)\right),\] since $c \leq t(n)$ and $d$ is constant. 

Setting $c = (t(n) \log t(n))^{1/(d+1)}$, the resulting space bound is $O((t(n) \log t(n))^{1-1/(d+1)})$. 
\end{proof}

We remark that putting {\sc Tree Evaluation} in ${\sf LOGSPACE}$ would also directly improve the above simulation as well, to use $O(t(n)^{1-1/(d+1)})$ space. 

\section{Some Consequences}\label{sec:consequences}

First, we describe some simple lower bounds that follow by diagonalization. We will use the following form of the space hierarchy theorem:

\begin{theorem}[\cite{DBLP:conf/focs/StearnsHL65}] 
\label{thm:sp-hier} For space constructible $s'(n), s(n) \geq n$ such that $s'(n) < o(s(n))$, we have 
$\SPACE[s(n)] \not\subset \SPACE[s'(n)]$.
\end{theorem}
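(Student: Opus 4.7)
The plan is to prove Theorem~\ref{thm:sp-hier} by the standard diagonalization argument, constructing an $O(s(n))$-space decider $D$ whose language differs from that of every machine running in space $O(s'(n))$. Fix an efficient universal simulator $U$ with the property that, given $\langle M, w\rangle$, $U$ simulates $M$ on $w$ using space $c_M \cdot (\text{space used by } M \text{ on } w) + O(\log|w|)$, where $c_M$ depends only on $M$ (specifically its alphabet and tape count). Standard constructions (e.g., encoding each of $M$'s tape cells into a fixed block of a single work tape) give such a $U$.

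I would define $D$ on inputs of the form $x = \langle M\rangle 1 0^k$ as follows: (i) using space-constructibility of $s$, mark exactly $s(n)$ cells of a work tape, where $n = |x|$; (ii) extract $M$ from $x$; (iii) run $U$ to simulate $M$ on $x$, keeping a separate binary step counter using at most $s(n)$ bits (so the counter ranges up to $2^{s(n)}$); (iv) if the simulation ever tries to leave the marked region, or the step counter overflows, \emph{reject}; otherwise, if $M$ halts within this joint budget, output the negation of its answer. All inputs not matching the padded format are rejected. Clearly $D$ runs in $O(s(n))$ space.

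To see $L(D)\notin \SPACE[s'(n)]$, suppose for contradiction some $M$ deciding $L(D)$ uses at most $c \cdot s'(n)$ space. Since $s'(n) = o(s(n))$, for all sufficiently large $n$ we have $c_M \cdot c \cdot s'(n) + O(\log n) < s(n)$, so the simulation fits inside the marked region. Moreover, the number of distinct configurations of $M$ on a length-$n$ input confined to $c \cdot s'(n)$ cells is $2^{O(s'(n))} \ll 2^{s(n)}$, so the step counter cannot overflow unless $M$ loops forever (impossible for a decider). For each such $M$ there are infinitely many padding lengths $k$ with $|\langle M\rangle 1 0^k|$ large enough; on any of these, $D$ successfully computes $\neg M(x)$, contradicting $L(D) = L(M)$.

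The main obstacle is, as usual for space hierarchies, detecting nonhalting within the same space budget used for simulation. This works here without any logarithmic loss because a machine using space $s'$ has only $2^{O(s')}$ configurations, so an $s'$-bit counter (absorbed into the $o(s)$ gap) suffices to catch loops; this is precisely why the hypothesis can be $s'(n) = o(s(n))$ rather than the weaker-looking $s'(n)\log s'(n) = o(s(n))$ analogue of the time hierarchy theorem. The only technical nuisance is that the simulator overhead $c_M$ depends on $M$, but this constant is swallowed by the $o(\cdot)$ gap for all sufficiently long padded inputs.
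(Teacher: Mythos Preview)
The paper does not give its own proof of Theorem~\ref{thm:sp-hier}; it simply cites Stearns--Hartmanis--Lewis and invokes the result as a black box in the proof of Corollary~\ref{cor:space-lb}. Your diagonalization argument is the standard proof and is correct: the padding trick, the space-bounded universal simulation with per-machine constant overhead, and the $2^{s(n)}$-step clock to catch loops (which suffices because a decider confined to $O(s'(n))$ space has only $2^{O(s'(n))} \ll 2^{s(n)}$ configurations, using $s'(n)\geq n$) are exactly the ingredients one expects, and your handling of the machine-dependent constant via the $o(\cdot)$ gap is the right way to close the argument.
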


\begin{reminder}{Corollary~\ref{cor:space-lb}} For space constructible $s(n) \geq n$ and all $\eps > 0$,
$\SPACE[s(n)] \not\subset \TIME[s(n)^{2-\eps}]$.
\end{reminder}

\begin{proof} Assume to the contrary that $\SPACE[s(n)] \subseteq \TIME[s(n)^{2-\eps}]$ for some $\eps > 0$ and some space constructible $s(n)$. By \Cref{thm:main}, we have
\[\SPACE[s(n)] \subseteq \TIME[s(n)^{2-\eps}] \subseteq \SPACE[(s(n)^{2-\eps} \log s(n))^{1/2}] \subseteq \SPACE[s'(n)]\] for a function $s'(n)$ which is $o(s(n))$. This contradicts \Cref{thm:sp-hier}.
\end{proof}

\begin{reminder}{\Cref{cor:quad-lb}}
The language $L = \{\langle M,x,1^k\rangle \mid |M| \leq k\text{ and }M(x)\text{ halts in space } k\}$ requires $n^{2-\eps}$ time to solve on a multitape Turing machine, for every $\eps  > 0$.    
\end{reminder}

\begin{proof} Suppose $L$ can be solved in $n^{2-\eps}$ time for some $\eps > 0$. We show every language in $\SPACE[n]$ could then be solved in time $O(n^{2-\eps})$, contradicting \Cref{cor:space-lb}. Let $L' \in \SPACE[n]$ be decided by a Turing machine $M$ using space $c n$ for a constant $c \geq 1$. Given an input $x$ to $M$, we call our $n^{2-\eps}$ time algorithm for $L$ on the input $\langle M,x,1^{c |x|}\rangle$ of length $n = O(|x|)$. This takes $O(|x|^{2-\eps})$ time, a contradiction.
\end{proof}

Observe the same proof also shows a time lower bound of the form $n^2/\log^c n$, for a constant $c > 0$.

\subsection{Subexponential Size Branching Programs for Circuits} 
\label{sec:bp}

Here, we observe that \Cref{thm:main} implies that subquadratic size circuits can be simulated with subexponential size branching programs:

\begin{reminder}{\Cref{cor:bp}}
There is a universal $k \geq 1$ such that for all $s \geq n$, every bounded fan-in circuit of size $s$ and $n$ inputs has a branching program of size at most $2^{k \sqrt{s} \log^k s}$.
\end{reminder}

Recall the {\sc Circuit Evaluation} problem: given the description of a Boolean circuit $C$ of fan-in two with one output, and given an input $x$, does $C$ on $x$ evaluate to $1$? We assume $C$ is encoded in \emph{topological order}: the gates are numbered $1,\ldots,s$, and for all $i \in [s]$, the $i$-th gate in the encoding only takes inputs from gates appearing earlier in the encoding. For simplicity, we further assume each gate $i$ provides a correct list of all future gates $j_1,\ldots,j_k > i$ that will consume the output of gate $i$. Note that when $s(n)$ is at least the number of input bits, we can still afford an encoding of size-$s(n)$ circuits in $O(s(n) \log s(n))$ bits, carrying this extra information. An old result commonly credited to Pippenger is that {\sc Circuit Evaluation} on topologically-ordered circuits can be efficiently solved on multitape Turing machines:
\begin{theorem}[Pippenger~\cite{Pippenger77}]\label{thm:ce}
{\sc Circuit Evaluation} $\in \TIME[n \cdot \poly(\log n)]$.
\end{theorem}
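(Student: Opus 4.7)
The plan is to simulate the topologically ordered circuit gate by gate on a multitape Turing machine, routing each gate's output to its consumers through a priority queue of \emph{pending messages}. A pending message is a pair $(j, v) \in [s] \times \{0,1\}$ meaning that the bit $v$ is destined to be consumed as one of the inputs of gate $j$, and messages are kept sorted by destination $j$. The key observation is that by topological ordering, when we are about to evaluate gate $g$ the (at most two) smallest-destination messages in the queue all have destination exactly $g$, so no index search is needed; we simply peel them off the front.

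First I would scan the input and encoding once, emitting an initial message $(j, x_i)$ for every listed consumer $j$ of every input bit $x_i$, and sort these initial messages by destination with standard multitape merge sort in $O(n \log n)$ time. Then I would walk through the gates $g = 1, \ldots, s$ in topological order: extract the (at most two) smallest-destination messages, which must be destined to $g$; compute $v_g$ from the gate's type; read the consumer list $j_1, \ldots, j_k$ of $g$ from the encoding; and insert the new messages $(j_1, v_g), \ldots, (j_k, v_g)$ into the queue. When $g$ is the designated output gate, return $v_g$ and halt. The total number of messages processed is $O(s)$, since $\sum_g k_g$ equals the total fan-in, which is $O(s)$ for bounded fan-in circuits.

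The main obstacle is supporting insert and extract-min in amortized $\poly(\log s)$ time on a multitape Turing machine. I would implement the pending queue as a log-structured collection of $O(\log s)$ sorted runs on dedicated tapes, with geometrically increasing capacities $1, 2, 4, \ldots$. A new message is appended to the smallest run, and whenever a run fills it is merged via a linear-time streaming pass into the next larger run; each message participates in $O(\log s)$ such merges, so total insertion work across all $O(s)$ messages is $O(s \log^2 s)$. Extract-min inspects the fronts of the $O(\log s)$ runs, compares their $O(\log s)$-bit keys, and removes the minimum. Since each message occupies $O(\log s)$ bits and the encoding has length $n = \Theta(s \log s)$, the overall running time is $O(s \cdot \poly(\log s)) = O(n \cdot \poly(\log n))$, as required.
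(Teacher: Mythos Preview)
Your route is genuinely different from the paper's. The paper (following Swamy's exposition of Pippenger) does a divide-and-conquer on the gate sequence: recursively evaluate the first $\lfloor s/2\rfloor$ gates, pass over their outputs to collect exactly those values that feed into the second half, sort that list by destination gate, then recurse on the remaining $\lceil s/2\rceil$ gates. This yields the recurrence $T(n)\le 2T(n/2)+O(n\log^2 n)$ and hence $O(n\log^3 n)$ time. Your message-passing simulation with a priority queue is a legitimate alternative, and your amortized accounting for the log-structured merge (each of $O(s)$ messages participates in $O(\log s)$ merges) is correct.

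There is, however, a real gap at the machine-model level. A multitape Turing machine has a \emph{fixed constant} number of tapes, independent of the input, so you cannot place each of the $\Theta(\log s)$ runs on its own ``dedicated tape.'' With the runs packed onto $O(1)$ tapes, both operations become delicate: cascading merges must not force you to shift the larger untouched runs, and extract-min must reach the current front of each of $\Theta(\log s)$ runs without a full scan. This can be made to work (allocate geometrically sized slots for the runs on one tape, maintain the $\Theta(\log s)$ current front elements together on a short auxiliary tape, and amortize the occasional rebuilds), but it is not the one-liner your sketch suggests, and you should either carry out that bookkeeping or cite a constant-tape priority-queue construction. The paper's divide-and-conquer sidesteps all of this: every phase is a global sort or a global merge, each a textbook $O(n\log n)$-time operation on two or three tapes, so no online data structure is needed.
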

The reference~\cite{Pippenger77} is difficult to find, however the PhD thesis of Swamy~(\cite[Chapter 2, Theorem 2.1]{DBLP:phd/us/Swamy78}) gives an exposition of an $O(n \cdot (\log n)^3)$ time bound.\footnote{Available at \url{http://static.cs.brown.edu/research/pubs/theses/phd/1978/swamy.pdf}.}  Swamy describes his construction as simulating oblivious RAMs / straight-line programs; it readily applies to circuits. The high-level idea is to solve a more general problem: given the description of a circuit $C$ and input $x$, we wish to insert the output values of each gate of $C(x)$ directly into the description of $C$. To solve this problem on $C$ of size $s$, we first recursively evaluate the circuit on the first $\lfloor s/2 \rfloor$ gates (in topological order) which sets values to all those gates. We pass over those values, and collect the outputs of all gates among the first $\lfloor s/2 \rfloor$ that will be used as inputs to the remaining $\lceil s/2 \rceil$ gates. We sort these outputs by gate index, then recursively evaluate the remaining $\lceil s/2 \rceil$ gates on $x$ plus the list of outputs. This leads to a runtime recurrence of $T(n) \leq 2 \cdot T(n/2) + O(n \cdot (\log n)^2)$ (using the fact that $n = \Theta(s \log s)$). 

Combining \Cref{thm:ce} and \Cref{thm:main}, we directly conclude that {\sc Circuit Evaluation} can be solved in $\sqrt{n}\cdot \poly(\log n)$ space. Now, given any circuit $C$ of size $s$, we hardcode its description into the input of a multitape Turing machine using $s'(n) = \sqrt{s}\cdot \poly(\log s)$ space for {\sc Circuit Evaluation}. Applying the standard translation of $s'(n)$-space algorithms into branching programs of $2^{O(s'(n))}$ size (see for instance the survey~\cite{DBLP:conf/fct/Razborov91}), \Cref{cor:bp} follows.

\section{Discussion} \label{sec:discussion}

We have shown that multitape Turing machines and circuits have surprisingly space-efficient evaluation algorithms, via a reduction to the {\sc Tree Evaluation} problem. Two reflective remarks come to mind. 

First, we find it very interesting that {\sc Tree Evaluation}, which was originally proposed and studied in the hopes of getting a handle on ${\sf LOGSPACE} \neq \P$, may turn out to be more useful for making progress on $\P \neq \PSPACE$. In any case, it is clear that {\sc Tree Evaluation} is a central problem in complexity theory.

Second, we find it fortunate that the main reduction of this paper (from time-$t$ multitape Turing machine computations to {\sc Tree Evaluation}) was found \emph{after} the Cook-Mertz procedure was discovered. Had our reduction been found first, the community (including the author) would have likely declared the following theorem (a cheeky alternative way of presenting the main reduction of \Cref{thm:main}) as a ``barrier'' to further progress on {\sc Tree Evaluation}, and possibly discouraged work on the subject:

\begin{quote}
``{\bf Theorem}.'' Unless the 50-year-old \cite{DBLP:journals/jacm/HopcroftPV77} simulation $\TIME[t] \subseteq \SPACE[t/\log t]$ can be improved, {\sc Tree Evaluation} instances of constant arity, height $h$, and $b$-bit values cannot be solved in $o(h \cdot b / \log(h\cdot b))$ space.\footnote{Here, we think of $o(h \cdot b / \log(h\cdot b))$ as shorthand for $O(h \cdot b/(f(h \cdot b) \cdot \log(h \cdot b)))$ for some unbounded function $f$.}
\end{quote}

\Cref{thm:main} and its relatives open up an entirely new set of questions that did not seem possible to ask before. Here are a few tantalizing ones.

\smallskip

{\bf Can the simulation of \Cref{thm:main} be improved to show $\TIME[t] \subseteq \SPACE[\sqrt{t}]$?} We discussed some prospects for a yes-answer in \Cref{sec:log} (\emph{e.g.}, showing {\sc Tree Evaluation} is in ${\sf LOGSPACE}$).  An interesting secondary question is whether the longstanding $O(\sqrt{t})$-space simulation of \emph{one tape} time-$t$ Turing machines~\cite{DBLP:journals/jacm/HopcroftU68a,DBLP:journals/jcss/Paterson72} can be improved to $O(t^{1/2-\eps})$ space, for some $\eps > 0$. 

\smallskip

{\bf Is there an $\eps > 0$ such that $\TIME[t] \subseteq \mathsf{ATIME}[t^{1-\eps}]$?} Is there a better speedup of time $t$, using alternating Turing machines? Recall the best-known simulation is $\TIME[t] \subseteq \mathsf{ATIME}[t/\log t]$~\cite{DBLP:journals/jcss/DymondT85}. A yes-answer to this question would imply (for example) a super-linear time lower bound on solving quantified Boolean formulas (QBF), which is still open~\cite{DBLP:journals/eccc/Williams08,DBLP:journals/cc/LiptonW13}. Note that if \Cref{thm:main} could be improved to $\TIME[t] \subseteq \SPACE[t^{1/2-\eps}]$ for some $\eps > 0$, then we would have $\TIME[t] \subseteq \mathsf{ATIME}[t^{1-2\eps}]$.

\smallskip

{\bf Can time-$t$ random-access Turing machines be simulated in space $O(t^{1-\eps})$, for some $\eps > 0$?} As mentioned in \Cref{rem:RAM}, the answer is yes for ``oblivious'' models in which data access patterns can be calculated in advance. In both \Cref{thm:main} and \Cref{thm:d-dimensional}, we exploit the \emph{locality} of low-dimensional tape storage: without that property, the indegrees of nodes in the computation graph would be rather high, and (as far as we can tell) the resulting simulation would not improve the known $O(t/\log t)$ space bounds for simulating $t$ time in random-access models~\cite{DBLP:journals/jcss/PaulR81,DBLP:journals/mst/HalpernLMW86}. 
Similarly to the previous question, if $\TIME[t] \subseteq \SPACE[t^{1/2-\eps}]$ for some $\eps > 0$, then the answer to the question would be yes, since for natural random-access models (e.g., log-cost RAMs), time $t$ can be simulated by $O(t^2)$-time multitape Turing machines~\cite{DBLP:journals/jacm/PippengerF79}. A similar statement holds for improving the $d$-dimensional simulation of \Cref{thm:d-dimensional}~\cite{DBLP:journals/siamcomp/Loui83}. On the other hand, if the answer to the question is no, then we would separate linear time for multitape Turing machines and linear time for random-access models, another longstanding open question (see for example~\cite{GrandjeanS02}).

\smallskip

{\bf Is a time-space tradeoff possible? For example, is $\TIME[t] \subseteq {\sf TIMESPACE}[2^{\tilde{O}(t^{\eps})}, \tilde{O}(t^{1-\eps})]$, for all $\eps > 0$?} The Cook-Mertz procedure needs to compute a low-degree extension of the function at each node, which is time-consuming: for time blocks of $b$ steps, it takes $2^{\Theta(b)}$ time to compute the low-degree extension at a given point. If low-degree extensions of time-$t$ computations could be computed more time-efficiently, then such a time-space tradeoff may be possible. Note that if the \emph{multilinear} extension of a given CNF on $n$ variables and $m$ clauses could be evaluated over a large field in $1.999^n \cdot 2^{o(m)}$ time, then the Strong Exponential Time Hypothesis~\cite{DBLP:journals/jcss/ImpagliazzoP01,DBLP:conf/iwpec/CalabroIP09} would be false: the multilinear extension is the identically-zero polynomial if and only if the CNF is unsatisfiable, so one could use DeMillo–Lipton–Schwartz–Zippel~\cite{DBLP:journals/ipl/DemilloL78} to test for unsatisfiability. However, this observation does not apparently rule out the possibility of evaluating low-degree but non-multilinear extensions efficiently.

\smallskip

{\bf Can the simulation be applied recursively, to show that $\TIME[t] \subseteq \SPACE[t^{\eps}]$ for all $\eps > 0$?} Note that a yes-answer would imply $\P \neq \PSPACE$. At first glance, a recursive extension of \Cref{thm:main} seems natural: we decompose a time-$t$ computation into $O(t/b)$ blocks, each of which are time-$b$ computations. (This property \emph{was} successfully exploited recursively in~\cite{DBLP:journals/cc/LiptonW13}, for example, to show some weak lower bounds on QBF.) However, we cannot directly apply recursion to time blocks, because in the Cook-Mertz procedure, the task at each function evaluation is not just to simulate for $b$ steps, but to compute a low-degree extension of the function (as noted in the previous question). If low-degree extensions of time-$t$ computations can be computed in small space, then there is a possibility of recursion. However, given the discussion on the previous question, there is reason to think this will be difficult.

\smallskip

{\bf Is there a barrier to further progress?} Perhaps lower bounds for the Node-Named Jumping Automata on Graphs (NNJAG) model (a popular model for restricted space lower bounds) \cite{DBLP:conf/focs/Poon93,DBLP:journals/tcs/Poon00,EdmondsPA99} could demonstrate a barrier. Many complex algorithms such as Reingold's~\cite{DBLP:journals/jacm/Reingold08} can be simulated in the NNJAG model~\cite{DBLP:conf/isaac/LuZPC05}; does the Cook-Mertz algorithm have this property as well? We believe the answer is probably no: NNJAG is fundamentally a node-pebbling model, and the Cook-Mertz procedure definitely breaks space lower bounds for pebbling DAGs and trees~\cite{DBLP:journals/jacm/LengauerT82,DBLP:journals/toct/CookMWBS12}. Pebbling lower bounds were a major bottleneck to improving \cite{DBLP:journals/jacm/HopcroftPV77}.

While \Cref{thm:main} and \Cref{thm:warmup} are non-relativizing (see for example~\cite{DBLP:series/wsscs/HartmanisCCRR93}), the simulations do permit a \emph{restricted} form of relativization, as do all prior simulations of time in smaller space. Define ${\sf TIME\text{-}LENGTH}^A[t(n),\ell(n)]$ to be the class of problems solvable in $O(t(n))$ time with queries to the oracle $A$, where all oracle queries are restricted to have length at most $\ell(n)$. Similarly define ${\sf SPACE\text{-}LENGTH}^A[s(n),\ell(n)]$. We observe the following extension of \Cref{thm:warmup}:
\begin{theorem}
    For every oracle $A$ and $t(n) \geq n$,
    ${\sf TIME\text{-}LENGTH}^A[t(n),\sqrt{t(n)} \log t(n)]$ is contained in ${\sf SPACE\text{-}LENGTH}^A[\sqrt{t(n)} \log t(n),\sqrt{t(n)} \log t(n)]$.
\end{theorem}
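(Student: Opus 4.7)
The plan is to adapt the warm-up proof of \Cref{thm:warmup} to the oracle setting, letting the time-block functions at each node of the \textsc{Tree Evaluation} instance actually issue oracle queries to $A$ as the simulation unfolds. Given a time-$t(n)$ oracle multitape TM $M$ with queries of length at most $\ell(n) := \sqrt{t(n)}\log t(n)$, I first observe that the Pippenger-Fischer oblivious simulation of \Cref{thm:oblivious} applies unchanged to oracle TMs: its construction makes head movements depend only on $n$ by a purely mechanical rearrangement of tape contents and state transitions, which is unaffected by whether certain transitions happen to be oracle calls. This yields an oblivious two-tape oracle TM $M'$ (still querying $A$) equivalent to $M$, running in time $T(n) = O(t(n) \log t(n))$, whose queries (routed through a designated region on one of its tapes) still have length at most $\ell(n)$. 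Because $M'$ is oblivious in head positions, \Cref{warmup-claim} still holds verbatim, so I can follow the warm-up construction with block length $b(n) := \sqrt{t(n)} \log t(n)$: partition time and tapes into blocks, build the computation graph $G_{M',x}$ and the tree $R_{M',x}$, and define $\text{content}(i)$ exactly as before.

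The single substantive modification is in the node function that computes $\text{content}(j)$ from the contents of its children. During its internal $b(n)$-step simulation of $M'$ on the supplied tape-block inputs, whenever an oracle query is reached, the function actually issues that query to $A$, using $O(b(n))$ scratch space on its query tape (since each query has length at most $\ell(n) = b(n)$), and splices the returned answer into the outgoing tape-block contents. Because the oracle is fixed, each Boolean-input evaluation of this node function is a well-defined deterministic function, and so its Cook-Mertz low-degree extension is an extension of an honest Boolean function, with interpolation producing a consistent polynomial exactly as in the oracle-free setting.

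Applying the Cook-Mertz algorithm (\Cref{thm:cm}) to $R_{M',x}$ yields total space $O(b(n) + (T(n)/b(n)) \log b(n)) = O(\sqrt{t(n)} \log t(n))$, and every oracle query issued along the way has length at most $\ell(n) = \sqrt{t(n)} \log t(n)$, placing the simulation in ${\sf SPACE\text{-}LENGTH}^A[\sqrt{t(n)} \log t(n),\sqrt{t(n)} \log t(n)]$ as required. The main obstacle I expect is confirming that Cook-Mertz's interpolation machinery coexists cleanly with oracle calls inside the node functions; the resolution is that oracle queries only occur during the ``base-case'' Boolean evaluations of the node functions, never during the algebraic combination of such evaluations, so no part of the polynomial interpolation step needs to be revised. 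A secondary, routine check is that a small-input analogue of the warm-up (for $t(n)\ge n$ rather than $t(n)\ge n^2$) goes through by introducing multiple leaf source-nodes that split the input $x$ across several blocks, exactly as in the main proof of \Cref{thm:main}.
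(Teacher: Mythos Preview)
Your proposal is correct and follows essentially the same approach as the paper: the paper's own justification is a two-sentence sketch noting that each oracle query can be arranged to fit in a single tape block of length $O(\sqrt{t(n)}\log t(n))$ and that the Cook--Mertz procedure treats the node evaluation functions as black boxes, so each $F_{h,i}$ may simply call $A$ as needed. Your write-up fleshes this out, and your remark that oracle calls occur only inside Boolean evaluations of the node functions (never in the algebraic interpolation layer) is exactly the content of the paper's ``black box'' observation made explicit.
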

The theorem follows because each oracle query can be arranged to be written in a single tape block of length $O(\sqrt{t(n)} \log t(n))$, and the Cook-Mertz procedure treats the evaluation functions as \emph{black boxes}. Thus each evaluation function $F_{h,i}$ can simply call the oracle $A$ as needed. (Tretkoff~\cite{DBLP:conf/coco/Tretkoff86} made a similar observation for the 
simulation of $\TIME[t]$ in $\Sigma_2 \TIME[o(t)]$, of Paul-Pippenger-Szemer\'{e}di-Trotter~\cite{DBLP:conf/focs/PaulPST83}. Such a length-restricted oracle model was also studied in~\cite{DBLP:journals/algorithmica/CaiW06}.) Is there a barrier to improving such (restricted) relativizing results to obtain $\P \neq \PSPACE$? This seems related to the fact that Cai and Watanabe~\cite{DBLP:journals/algorithmica/CaiW06} were unable to collapse $\PSPACE$ to $\P$ with ``random access to advice'' (length-restricted access to oracles).

\paragraph{Acknowledgments.} I am grateful to Rahul Ilango, Egor Lifar, Priya Malhotra, Danil Sibgatullin, and Virginia Vassilevska Williams for valuable discussions on {\sc Tree Evaluation} and the Cook-Mertz procedure. I am also grateful to Shyan Akmal, Paul Beame, Lijie Chen, Ce Jin, Dylan McKay, and the anonymous STOC reviewers for helpful comments on an earlier draft of this paper. 

\bibliography{main}
\bibliographystyle{alphaurl}

\appendix

\section{Appendix: An Overview of The Cook-Mertz Procedure}

\label{appendix:cm}

The goal of this appendix is to describe the Cook-Mertz procedure for {\sc Tree Evaluation}, and how it extends to arbitrary $d$-ary trees of maximum height $h$, and not just \emph{complete} $d$-ary trees of height $h$, with the full set of $(d^h - 1)/(d-1)$ possible nodes. In particular, any tree with every inner node having \emph{at most} $d$ children and depth \emph{at most} $h$ can be evaluated using their algorithm. 

\begin{reminder}{\Cref{thm:cm}} {\rm \cite[Theorem 7]{DBLP:conf/stoc/CookM24}}  {\sc Tree Evaluation} on trees of bit-length $b$, \emph{maximum} height $h$, and fan-in \emph{at most} $d$, can be computed in $O(d \cdot b + h \log (d \cdot b))$ space.
\end{reminder}

Our description below \emph{heavily} draws from Goldreich's exposition of the Cook-Mertz procedure~\cite{Goldreich-CM24}. Let us stress that we make absolutely no claims of originality in the exposition below; we are only including the following in order to make our paper more self-contained.

First, let us describe how to handle inner nodes in the tree which have less than $d$ children. Letting $\deg(u)$ denote the number of children of $u$, for all $j \in [b]$ we let $f_{u,j} : \{0,1\}^{\deg(u) \cdot b}\rightarrow \{0,1\}$ be the function which returns the $j$-th bit of the function $f_u$ computed at node $u$. For every node $u$ with $\deg(u) < d$, we add extra children to $u$ which are simply leaves with value $0^b$, so that $u$ has exactly $d$ children. The resulting new functions $f_{u,j} : \{0,1\}^{d \cdot b} \rightarrow \{0,1\}$ at node $u$ are defined to simply ignore all bits of input with index larger than $\deg(u) \cdot b$. Now all inner nodes of the tree have exactly $d$ children, and all values of inner nodes remain the same as before. For our intended application to an \emph{implicit} {\sc Tree Evaluation} instance, we note that the reduction of this paragraph can also be implemented in a space-efficient way: we only need to be able to check the number of children of the current node $u$, which we can easily do in our applications (e.g., \Cref{thm:warmup} and \Cref{thm:d-dimensional}). In particular, in our reduction the memory stores the entire computation graph which defines the tree, so we know the number of children of every tree node.

Let $\F$ be a field of characteristic two such that $|\F| \geq db^2$. For a node $u$ and index $j \in [b]$, let $\widetilde{f}_{u,j}$ be the multilinear extension (over $\F$) of the function $f_{u,j} : \{0,1\}^{d \cdot b}\rightarrow \{0,1\}$ which returns the $j$-th bit of the function $f_u$ computed at node $u$. In particular, $\widetilde{f}_{u,j}$ is a polynomial of degree $d\cdot b$ on $d\cdot b$ variables. Letting $\vec{x_i}$ denote a block of $b$ variables for $i=1,\ldots,d$,
\begin{align}\label{eq:multilinear}
\widetilde{f}_{u,j}(\vec{x_1},\ldots,\vec{x_d}) = \sum_{a_1,\ldots,a_d \in \{0,1\}^b} \chi_{a_1,\ldots,a_d}(\vec{x_1},\ldots,\vec{x_d}) \cdot f_{u,j}(a_1,\ldots,a_d),
\end{align}
where $\chi_{a_1,\ldots,a_d}(\vec{x_1},\ldots,\vec{x_d})$ is the unique multilinear polynomial of degree $d \cdot b$ such that 
\[\chi_{a_1,\ldots,a_d}(a_1,\ldots,a_d) = 1,\] and $\chi_{a_1,\ldots,a_d}(a'_1,\ldots,a'_d) = 0$ for all $(a'_1,\ldots,a'_d) \in (\{0,1\}^b)^{d}$ such that $(a'_1,\ldots,a'_d) \neq (a_1,\ldots,a_d)$. 

Let $[d]^{\star}$ be the set of all strings over the alphabet $\{1,\ldots,d\}$ (including the empty string). 
Observe that for every node with label $u \in [d]^{\star}$, its children have labels $u1,\ldots,ud$. Thus by definition, the value of $u$ is 
\begin{align}\label{eqn1}
v_u &= f_u(v_{u1},\ldots,v_{ud}) = (f_{u,1}(v_{u1},\ldots,v_{ud}),\ldots,f_{u,b}(v_{u1},\ldots,v_{ud}))\\
\label{eqn2}
&= (\widetilde{f}_{u,1}(v_{u1},\ldots,v_{ud}),\ldots,\widetilde{f}_{u,b}(v_{u1},\ldots,v_{ud})).
\end{align}

The {\sc Tree Evaluation} procedure has two types of storage:
\begin{itemize}
    \item One storage type is ``local'', holding the index of the current node ($O(h \log d)$ bits), as well as a recursion stack of height at most $h$ with $O(\log(d \cdot b))$ bits stored at each level of recursion, which is $O(h \log (d \cdot b)))$ bits in total. 
    
    \item The other type of storage is ``catalytic'' or ``global'' storage (\cite{Goldreich-CM24}). This $O(d \cdot b)$ space is used to store $d+1$ separate $b$-bit blocks. In the final version of the algorithm that we describe, each $b$-bit block corresponds to storing $O(b/\log |\F|)$ elements of $\F$. These blocks are repeatedly reused at every node of the tree, to evaluate the functions at each node. 
    
    In the following, for simplicity, we will describe an $O(d \cdot b \log(d \cdot b))$ space bound using multilinear extensions; with minor modifications, Cook-Mertz~\cite{DBLP:conf/stoc/CookM24} (see also Goldreich~\cite{Goldreich-CM24}) show how relaxing to merely \emph{low-degree} extensions allows us to use only $O(d \cdot b)$ space. At the end, we will discuss how to achieve this space bound.
\end{itemize} 
At any point in time, we will denote the entire storage content of the algorithm by
$(u,\hat{x}_1,\ldots,\hat{x}_d,\hat{y})$, where
\begin{itemize}
\item $u \in [d]^{\star}$, $|u| \leq h$, is the label of a node in the tree, denoting the path from root to that node. ($\eps$ denotes the root node, $i \in [d]$ denotes the $i$-th child of the root, etc.), 
\item each $\hat{x}_i$ is a collection of $b$ registers $\hat{x}^{(1)}_i,\ldots,\hat{x}^{(b)}_i$ holding $b$ elements of $\F$, and
\item $\hat{y}$ is a collection of $b$ registers $\hat{y}^{(1)},\ldots,\hat{y}^{(b)}$, also holding $b$ elements of $\F$.
\end{itemize}
Initially, we set the storage content to be 
\[(\eps,\vec{0},\ldots,\vec{0}),\] i.e., all-zeroes, starting at the root node.

For a node label $u$, let $v_u \in \{0,1\}^b$ be the value of $u$; we think of $v_u$ as an element of $\F^b$. Our goal is to compute $v_{\eps}$, the value of the root of the tree. We will give a recursive procedure {\sc Add} which takes storage content $(u,\hat{x}_1,\ldots,\hat{x}_d,\hat{y})$ and returns the content $(u,\hat{x}_1,\ldots,\hat{x}_d,\hat{y}+ v_u)$. That is, {\sc Add} adds the value $v_u$ to the last register, over the field $\F$. Observe that, if {\sc Add} works correctly, then calling {\sc Add} on $(\eps,\vec{0},\ldots,\vec{0})$ will put the value of the root node into the last register.

Now we describe {\sc Add}. Let $m$ be such that $|\F| = m+1 \geq d \cdot b$, and let $\omega$ be an $m$-th root of unity in $\F$. 


\begin{framed}
\noindent {\sc Add}: Given the storage content $(u,\hat{x}_1,\ldots,\hat{x}_d,\hat{y})$,\\
If $u$ is a leaf, look up the value $v_u$ in the input, and return $(u,\hat{x}_1,\ldots,\hat{x}_d,\hat{y} + v_u)$.\\
For $i = 1,\ldots,m$,\\
\hspace*{0.5em} For $r = 1,\ldots,d$,\\
\hspace*{1.2em} Rotate registers, so $(\hat{x}_r,\ldots,\hat{x}_d,\hat{y},\hat{x}'_1,\ldots,\hat{x}'_{r-1})$ shifts to $(\hat{x}_{r+1},\ldots,\hat{x}_d,\hat{y},\hat{x}'_1,\ldots,\hat{x}'_{r-1},\hat{x}_r)$.\\
\hspace*{1.2em} Multiply $\hat{x}_r$ by $\omega^i$, and call {\sc Add} on $(ur,\hat{x}_{r+1},\ldots,\hat{x}_d,\hat{y},\hat{x}'_1,\ldots,\hat{x}'_{r-1},\omega^i \cdot \hat{x}_r)$, \\ 
\hspace*{1,7em} which returns $(ur, \hat{x}_{r+1},\ldots,\hat{x}_d,\hat{y},\hat{x}'_1,\ldots,\hat{x}'_{r-1},\hat{x}'_r)$, where $\hat{x}'_r = \omega^i \cdot \hat{x}_r+v_{ur}$.\\
\hspace*{1.7em} \emph{(here, $ur$ is just the concatenation of the string $u$ with the symbol $r$)}\\
\hspace*{0.5em} \emph{(at this point, the storage has the form: $(ud,\hat{y},\omega^i \cdot \hat{x}_1+v_{u1},\ldots,\omega^i \cdot \hat{x}_d+v_{ud})$)}\\
\hspace*{0.5em} Update $ud$ back to $u$.\\
\hspace*{0.5em} For all $j = 1,\ldots,b$,\\
\hspace*{1.2em} Compute $\widetilde{f}_{u,j}(\omega^i \cdot \hat{x}_1+v_{u1},\ldots,\omega^i \cdot \hat{x}_d+v_{ud})$ using $O(b)$  extra space.\\
\hspace*{1.2em} Update $\hat{y}^{(j)} = \hat{y}^{(j)} +  \widetilde{f}_{u,j}(\omega^i \cdot \hat{x}_1+v_{u1},\ldots,\omega^i \cdot \hat{x}_d+v_{ud})$.\\
\hspace*{1.2em} Erase the $O(b)$ space used to compute $\widetilde{f}_{u,j}$.\\
\hspace*{0.5em} For $r = d,\ldots,1$,\\
\hspace*{1,2em} Call {\sc Add} on $(ur,\hat{x}_{r+1},\ldots,\hat{x}_d,\hat{y},\omega^i \cdot \hat{x}_1 + v_{u1},\ldots,\omega^i \cdot \hat{x}_r + v_{ur})$, \\ 
\hspace*{1,7em} which returns $(ur,\hat{x}_{r+1},\ldots,\hat{x}_d,\hat{y},\omega^i \cdot \hat{x}_1 + v_{u1},\ldots,\omega^i \cdot \hat{x}_{r-1} + v_{u(r-1)},\hat{x}''_r)$, where $\hat{x}''_r = \omega^i \cdot \hat{x}_r$.\\
\hspace*{1.7em} \emph{(note: here we use the fact that $\F$ is characteristic two)}\\
\hspace*{1.2em} Divide $\hat{x}''_r$ by $\omega^i$, so that $\hat{x}''_r = \hat{x}_r$.\\
\hspace*{1.2em} Rotate registers, so $(\hat{x}_{r+1},\ldots,\hat{x}_d,\hat{y},\omega^i \cdot \hat{x}_1 + v_{u1},\ldots,\omega^i \cdot \hat{x}_{r-1} + v_{u(r-1)},\hat{x}_r)$ shifts to \\
\hspace*{2.2em} $(\hat{x}_r,\hat{x}_{r+1},\ldots,\hat{x}_d,\hat{y},\omega^i \cdot \hat{x}_1 + v_{u1},\ldots,\omega^i \cdot \hat{x}_{r-1} + v_{u(r-1)})$.\\
\hspace*{0.5em} Update $u1$ back to $u$.\\
\emph{(claim: now the storage has the form $(u,\hat{x}_1,\ldots,\hat{x}_d,\hat{y}+ \widetilde{f}_{u}(v_{u1},\ldots,v_{ud}))$)}\\
\emph{(note that $v_{u} = \widetilde{f}_{u}(v_{u1},\ldots,v_{ud})$, by \eqref{eqn1} and \eqref{eqn2})}\\
Return $(u,\hat{x}_1,\ldots,\hat{x}_d,\hat{y} + v_u)$.
\end{framed}

Recall that $\F$ is characteristic two, so that adding $v_{ur}$ twice to the register $\hat{x}_r$ has a net contribution of zero. The key to the correctness of {\sc Add} (and the \emph{claim} in the pseudocode) is the following polynomial interpolation formula (proved in \cite{DBLP:conf/stoc/CookM24}): for every $m$-th root of unity $\omega$ over $\F$, for every $\hat{x}_1,\ldots,\hat{x}_d, v_{u1},\ldots,v_{ud} \in \F^b$, and for every polynomial $P$ of degree less than $m$,
\begin{align}\label{eq:interp}
\sum_{i=1}^m P(\omega^i \cdot \hat{x}_1 + v_{u1},\ldots,\omega^i \cdot \hat{x}_d + v_{ud}) = P(v_{u1},\ldots,v_{ud}).
\end{align}
(Note that our formula is slightly simpler than Cook-Mertz~\cite{DBLP:conf/stoc/CookM24} and Goldreich~\cite{Goldreich-CM24}, because we assume $\F$ is a field of characteristic two.)
Equation~\eqref{eq:interp} ensures that the content of $\hat{y}$ is indeed updated to be $\hat{y} + \widetilde{f_u}(v_{u1},\ldots,v_{ud})$, which equals $\hat{y} + v_u$ by equations \eqref{eqn1} and \eqref{eqn2}.

Let us briefly compare {\sc Add} to the ``obvious'' algorithm for {\sc Tree Evaluation}. The ``obvious'' algorithm allocates fresh new space for each recursive call to store the values of the children, traversing the tree in a depth-first manner. Each level of the stack holds $O(d \cdot b)$ bits, and this approach takes $\Theta(h \cdot d \cdot b)$ space overall. In contrast, the algorithm {\sc Add} \emph{adds} the values of the $d$ children to the existing $O(d \cdot b \log b)$ content of the $d$ registers, and uses polynomial interpolation to add the correct value of the node to the last register.

More prescisely, while {\sc Add} has no initial control over the content of $\hat{x}_1,\ldots,\hat{x}_d$, equation \eqref{eq:interp} allows {\sc Add} to perform a type of {\bf ``worst-case to \emph{arbitrary}-case''} reduction: in order to add the value of function $f_u$ on specific $v_{u1},\ldots,v_{ud}$ to the register $\hat{y}$, given that the initial space content is some \emph{arbitrary} $\hat{x}_1,\ldots,\hat{x}_d,\hat{y}$, it suffices to perform a running sum from $i = 1$ to $m$, where we multiply the $\hat{x}_1,\ldots,\hat{x}_d$ content by $\omega^i$, \emph{add} $v_{u1},\ldots,v_{ud}$ into the current space content, then evaluate the functions $\tilde{f}_{u,j}$ on that content, storing the result of the running sum into $\hat{y}$, which (after summing from $i=1,\ldots,m$) results in \emph{adding} the value $v_u$ into $\hat{y}$. That is, starting from any arbitrary values in the registers which are beyond our control, we can compute $f_u$ on any desired input registers.

The overall space usage of the above procedure is \[O(h \cdot \log (d \cdot b) + d \cdot b \log (d \cdot b)).\] The ``local'' storage is used to store current values $i \in [m]$, $r \in [d]$, and $O(1)$ bits to store a program counter (storing which of the two sets of recursive calls we're at), at each level of the tree. This takes $O(\log (d \cdot b))$ bits for each level of recursion, and $O(h \cdot \log (d \cdot b))$ space overall. The node index $u \in [d]^{\star}$ is also stored, which takes $O(h \cdot \log d)$ bits.

The ``global'' storage holds the content $(\hat{x}_1,\ldots,\hat{x}_d,\hat{y})$. Each $\hat{x}_i$ and $y$ consist of $b$ elements of $\F$, which take $O(b \log (d \cdot b))$ bits each. To compute the multilinear extension $\widetilde{f}_{u,j}$ on $(\omega^i \cdot \hat{x}_1+v_{u1},\ldots,\omega^i \cdot \hat{x}_d+v_{ud})$, we follow equation \eqref{eq:multilinear}: we use $O(b)$ bits of space to sum over all $a_1,\ldots,a_d \in \{0,1\}^b$, and we use $O(\log |\F|) \leq O(\log(d\cdot b))$ extra space to evaluate the expression $\chi_{a_1,\ldots,a_d}(\omega^i \cdot \hat{x}_1+v_{u1},\ldots,\omega^i \cdot \hat{x}_d+v_{ud})$ and multiply it with the value $f_{u,j}(a_1,\ldots,a_d) \in \{0,1\}$.

Finally, we describe how to improve the space usage to $O(h \log (d\cdot b) + d \cdot b)$. The idea is to use ``low-degree extensions'' of $f_u$, rather than multilinear extensions, and to group the $b$-bit output of $f_u$ into approximately $\lceil b/\log |\F| \rceil$ blocks, instead of $b$ blocks. In more detail, we modify the polynomials $\widetilde{f}_{u,j}$ over $\F$ so that they have $O(d \cdot (b/\log|\F|))$ variables instead of $O(d \cdot b)$ variables, and the $j$ ranges over a set $\{1,\ldots,\lceil cb/\log|\F|\rceil \}$ for a constant $c > 0$, instead of $[b] = \{1,\ldots,b\}$. To accommodate this, we will need to adjust the order of $\F$ slightly. 

Let $|\F| = 2^q$, for an even integer $q$ to be determined later (recall $\F$ is characteristic two). Let $S \subseteq \F$ be a set of cardinality $2^{q/2}$, which is put in one-to-one correspondence with the elements of $\{0,1\}^{q/2}$, and let $t = \lceil b/\log |S|\rceil$. We can then view the function $f_u : \{0,1\}^{d \cdot b} \rightarrow \{0,1\}^b$ as instead having domain $(S^t)^d$, and co-domain $S^t$. That is, we can think of the output of $f_u$ as the concatenation of $t$ subfunctions $(f_{u,1},\ldots,f_{u,t})$, with each $f_{u,j} : (S^t)^d \rightarrow S$. For each $j=1,\ldots,t$, the multilinear polynomial $\widetilde{f}_{u,j}$ of equation \eqref{eq:multilinear} can then be replaced by another polynomial in $d \cdot t$ variables over $\F$:
\begin{align}\label{eq:low-degree}
\widetilde{f}_{u,j}(\vec{x_1},\ldots,\vec{x_d}) = \sum_{a_1,\ldots,a_d \in S^t} \chi_{a_1,\ldots,a_d}(\vec{x_1},\ldots,\vec{x_d}) \cdot f_{u,j}(a_1,\ldots,a_d),
\end{align}
where again $\chi_{a_1,\ldots,a_d}(\vec{x_1},\ldots,\vec{x_d})$ is a polynomial over $\F$ such that $\chi_{a_1,\ldots,a_d}(a_1,\ldots,a_d) = 1$, and $\chi_{a_1,\ldots,a_d}$ vanishes on all $a'_1,\ldots,a'_d \in S^t$ such that $(a'_1,\ldots,a'_d) \neq (a_1,\ldots,a_d)$. Such a polynomial $\chi_{a_1,\ldots,a_d}$ can be constructed with degree $d\cdot(|S|-1)\cdot t = d \cdot (\sqrt{|\F|} - 1) \cdot t$. As long as this quantity is less than $|\F|$, we can pick an $m$-th root of unity $\omega$ with $m = |\F|-1$ and we may apply the polynomial interpolation formula of equation \eqref{eq:interp}. WLOG, we may assume $d$ and $b$ are even powers of two (otherwise, we can round them up to the closest such powers of two). Setting $2^q = d^2 b^2$, we have \[d \cdot (|S| - 1) \cdot t \leq d \cdot (d \cdot b - 1) \cdot b < d^2 b^2 = |\F|.\] 

As a result, each of the registers $\hat{x}_1,\ldots,\hat{x}_d,\hat{y}$ can now be represented with $t = \lceil b/\log |S| \rceil$ elements of $\F$, rather than $b$ elements of $\F$ as before. Since $\log|S| =\Theta(\log|\F|)$, each such register can now be represented with $O(d \cdot b)$ bits instead, and the new polynomials of \eqref{eq:low-degree} can still be evaluated in $O(b)$ space.
\end{document}